\documentclass[preprint,12pt]{elsarticle}
\usepackage[utf8]{inputenc}
\usepackage{amsthm}
\usepackage{amsmath}
\usepackage{amsfonts}
\usepackage{amssymb}
\usepackage{graphicx}
\usepackage{epstopdf}
\usepackage{color}
\usepackage{multirow}
\usepackage{mathtools}
\usepackage{xcolor}
\usepackage{a4wide}
\usepackage{bm}
\usepackage{caption}
\usepackage{subcaption}
\usepackage[
  separate-uncertainty = true,
  multi-part-units = repeat
]{siunitx}
\usepackage{rotating}
\usepackage{verbatim}
\usepackage{lineno,hyperref}
\modulolinenumbers[5]
\usepackage{tikz}
\tikzstyle{noeud} = [circle, draw, fill=white, inner sep=2pt]
\journal{Theoretical Computer Science}
\newtheorem{theorem}{Theorem}[section]
\newtheorem{lemma}{Lemma}[section]

\newdefinition{remark}{Remark}[section]

\newdefinition{definition}{Definition} \newdefinition{example}{Example}

\usepackage[noend,linesnumbered,ruled,lined]{algorithm2e}
\SetAlFnt{\scriptsize}
\usepackage{amsfonts}
\usepackage{makecell}

\usepackage[usenames,dvipsnames]{pstricks}
\usepackage{pstricks-add}
\usepackage{epsfig}
\usepackage{pst-grad} 
\usepackage{pst-plot} 
\usepackage[space]{grffile} 
\usepackage{etoolbox} 
\makeatletter 
\patchcmd\Gread@eps{\@inputcheck#1 }{\@inputcheck"#1"\relax}{}{}
\makeatother
\usepackage{mathtools}

\usepackage{graphicx}
\usepackage{tabularx}
\usepackage{textcomp}
\usepackage{xcolor}
\usepackage[labelformat=simple]{subcaption}

\usepackage{booktabs}

\usepackage{pgf,tikz,pgfplots}
\pgfplotsset{compat=1.15}
\usepackage{mathrsfs}
\usepackage{hyperref}
\usepackage{multicol}

\SetCommentSty{mycommfont}

\journal{ArXiv}

\usepackage{a4wide}
\begin{document}
\usetikzlibrary{arrows}
\begin{frontmatter}

\title{Cops and Lethal Robber on Rings: A Distributed Perspective}

\author[1]{Amanpreet Singh Saini}
\author[1]{Ashish Saxena }
\author[1]{Kaushik Mondal}
\affiliation[1]{organization={Department of Mathematics},
            addressline={Indian Institute of Technology Ropar}, 
            city={Rupnagar},
            postcode={140001}, 
            state={Punjab},
            country={India}}

\begin{abstract}
The Cops and Robber game is extensively studied in the sequential setting where the main goal is to capture the robber. Capturing the robber means at least one cop, and the robber will be at the same vertex together at some time.  There are several variants, including variants where the goal of the cops is to surround the robber. Surrounding the robber means there is at least one cop in each of the neighboring vertices of the robber's position.  In this paper, we introduce it in the distributed setting while empowering the robber by saying it can even kill cops. Specifically, in our model, the robber moves in odd rounds and has unbounded speed, cops move in even rounds, and if one or more cops move into a vertex where the robber is currently residing, all these cops get killed. We call this {\it lethal robber}. This also connects our work to the Intruder Capture and Black Hole Search problems by introducing an entity which is dynamic as well as lethal, a setting that, to the best of our knowledge, has not been studied.

In this work, we introduce the lethal robber, define the {\it cops and lethal robber} problem in the distributed setting and study it on a static ring of size $n$. We prove $n$ cops are not enough, even if all start from the same vertex, and provide an algorithm starting from an arbitrary initial configuration that requires $n+\lfloor\log n \rfloor+4$ cops in the worst case. 

\end{abstract}

\begin{keyword}
Cops, Lethal robber,
Anonymous ring,
Distributed algorithm,
Deterministic algorithm.
\end{keyword}
\end{frontmatter}

\section{Introduction}
The \textit{Cops and Robber} is a pursuit-evasion game played in rounds on a finite graph $G$ between a set of $k\geq 1$ cops and a single robber. In its classical form, before starting the game, an initial position on the vertices of $G$ is chosen first by the cops, then by the robber. Then, in each round, first the cops, then the robber, move to neighbouring vertices or (if allowed by the variant of the game) stay in the current location. The game ends if the cops capture the robber. That is, the robber and at least one cop occupy the same vertex, in which case the cops have won. The robber wins by forever avoiding capture; note that, in this case, the game never ends. Recently, Jungeblut et al. \cite{Jungeblut_2025} introduced a variant that replaces capture with a stronger requirement: the cops must surround the robber, i.e., occupy all neighbours of the robber’s current position, thereby preventing any escape. This surrounding variant captures situations where containment, rather than direct capture, is the objective. 

Most existing work assumes that interaction with the robber is safe, meaning that a cop can freely move onto the robber’s vertex without consequences. In this work, we depart from this assumption and consider a setting where the robber behaves as a \textit{lethal entity}: any cop that moves onto the robber’s position is immediately destroyed without leaving any trace. At the same time, we restrict the robber’s power by disallowing it from moving onto vertices occupied by cops. Indeed, if such moves were permitted, the robber could eliminate all cops, making any meaningful notion of containment or surrounding impossible. This restriction ensures a well-defined and non-trivial objective, where the cops must coordinate to safely surround the robber. We call this problem the \textit{Cops and Lethal Robber game} ($\mathcal{CLR}$).

A related variant has been studied in~\cite{Bonato_2013}, where the robber is allowed to move onto a vertex occupied by cops and eliminate one of them; in particular, if two cops are present, the robber may eliminate one while the other can still capture it. In contrast, we adopt a stricter and more adversarial model: if multiple cops move onto the robber’s vertex simultaneously, all of them are destroyed. 

At the same time, $\mathcal{CLR}$ is closely related to the Intruder Capture (IC) problem \cite{Flocchini_2002}, where the objective is to capture an intruder that may move arbitrarily fast and is aware of the positions of all agents. The challenge is to design a strategy that guarantees the intruder's capture despite these advantages. Similarly, in $\mathcal{CLR}$, the robber is arbitrarily fast, and the cops must progressively restrict its mobility until it is eventually surrounded. As in IC, the robber's location is initially unknown. However, unlike IC, the robber in $\mathcal{CLR}$ is lethal, introducing a fundamentally new challenge, as any cop that encounters the robber is eliminated. A further connection can be drawn with the Black Hole Search (BHS) problem~\cite{Dobrev_2001}. In BHS, a black hole is a lethal vertex that destroys any mobile entity entering it, and the objective is to identify its location. The lethal robber in $\mathcal{CLR}$ can be viewed as a dynamic counterpart of a black hole: any cop entering the robber's vertex is destroyed. However, in contrast to BHS, where the lethal object is static, the robber is mobile, strategic, and actively attempts to avoid containment. Rather than only detecting the dangerous entity, a more meaningful objective is to \emph{capture} or \emph{block} its movement.
Thus, $\mathcal{CLR}$ combines aspects of pursuit-evasion and lethal-environment, introducing challenges that are absent in either setting alone.

In this work, we study $\mathcal{CLR}$ in the distributed setting. The details of the distributed model and problem definition are presented in the next section.

\subsection{Model and the problem definition}
\noindent \textbf{Network model:} The network is modeled as an undirected, anonymous, port-labeled simple graph, denoted by $G = (V, E)$, where $V$ and $E$ are the sets of vertices and edges, respectively. In this work, we consider $G$ as a ring $C_n$, where $n=|V|$. The vertices are anonymous, and at each vertex $v \in V$, the incident edges are assigned distinct port numbers from the set $\{0,1\}$. Port labels are local: an edge $(u,v) \in E$ may have different port numbers at its two endpoints. Such labeling is necessary, as without port numbers it is impossible to distinguish among outgoing edges at a vertex \cite{Dessmark2006}.

\noindent \textbf{Cop model:} We consider $k \geq 1$ cops placed arbitrarily on the vertices of the ring $C_n$ by an adversary. Each cop is assigned a unique identifier from the range $[1,n^\lambda]$, where $\lambda$ is a positive constant. The identifier of a cop $c$ is denoted by $c.\mathrm{ID}$. Each cop knows only its own identifier and has no information about the identifiers of other cops. Furthermore, the cops have no prior knowledge of the size or structure of the ring, the initial positions of other cops, or the distances between them. Each cop can distinguish between port 0 and port 1 at the node it resides at any round. Also, each cop is equipped with some memory.

The system operates in synchronous rounds, starting from round $0$. Following the classical setting, we assume that the cops act in even-numbered rounds. In each such round, a cop at a vertex $v$ can observe the degree of $v$ and the port numbers of all incident edges. During each even round, every cop executes a \textit{Communicate--Compute--Move} cycle, in which it first communicates with all other cops present at the same vertex, then performs local computation to decide whether to move and, if so, selects a port number, and finally moves through the selected port, if any. Upon entering a vertex, each cop $c$ records the port through which it arrived, denoted $c.portin$. Two cops crossing the same edge simultaneously in opposite directions cannot detect or communicate with each other. We say a configuration is \underline{rooted} if all cops are initially co-located at one vertex, and \underline{scattered} otherwise.

\noindent \textbf{Lethal robber model:} We consider a \emph{lethal robber}. Initially, the robber has complete knowledge of $C_n$, including the positions of all cops and their algorithms. This knowledge is available to the robber in every odd-numbered round. At a round, if the robber is at vertex $u$, it may either stay at $u$ or move to a vertex $v$ provided there exists a path from $u$ to $v$ with no vertex on that path occupied by any cop. The robber is also not allowed to move onto a vertex occupied by a cop. The robber has unbounded speed and thus can traverse any such path within a single round, but must move along the edges of the graph. The cops, on the other hand, do not know the location of the robber and can move only to an adjacent vertex in one round. If one or more cops move onto the vertex currently occupied by the robber, all such cops are destroyed without leaving any trace.  

\noindent \textbf{Problem definition:}
The $\mathcal{CLR}$ game is played in rounds on the ring $C_n$ between a set of $k \geq 1$ cops and a single lethal robber. Before the start of the game, the cops are placed on the vertices of $C_n$ by an adversary, after which the robber chooses its initial position on some vacant vertex. The cops have no knowledge of the robber’s location or the locations of the other cops. The game proceeds in synchronous rounds. The cops act in each even round, followed by the robber that acts in each odd round, according to the rules of the model. The cops win if they \emph{surround} the robber; that is, at some round $r$, the robber is located at a vertex $v \in V$ and all neighbors of $v$ are occupied by cops. In this case, the robber cannot move and is captured. The robber wins by either eliminating all cops or forever avoiding being surrounded; that is, if the robber is at a vertex $v$, then at least one neighbor of $v$ is not occupied by any cop. In the latter case, the game never ends.

The objective is to design a distributed strategy for the cops that guarantees winning the game from any arbitrary initial configuration.

\subsection{Related work}\label{sec:related work}
The study of pursuit--evasion games on graphs is centered around the classical \emph{Cops and Robber} model, introduced independently by Nowakowski and Winkler and by Quilliot~\cite{nowakowski1983vertex,quilliot1978jeux}. The extension to multiple cops and the notion of the cop number \(c(G)\) were formalized by Aigner and Fromme~\cite{aigner1984game}, who established foundational results such as the fact that three cops suffice for planar graphs. Since then, the classical model has been extensively studied; we refer to~\cite{bonato2011game,nowakowski2019game} for comprehensive overviews. A central open problem in this area is Meyniel’s conjecture, which asserts that \(c(G)=O(\sqrt{n})\) for any connected graph on \(n\) vertices~\cite{meyniel1985cop,bonato2014survey}. From a computational perspective, deciding whether \(c(G)\leq k\) is NP complete~\cite{goldstein1995complexity}.

Several variants of the classical Cops and Robber game have been proposed to study the effect of restricting the robber's movement. In the restrictive vertex model of Burgess et al.~\cite{burgess2019restricted}, the robber is forbidden from entering vertices occupied by cops, a framework later extended by Bradshaw et al.~\cite{bradshaw2020bounded}. This idea was generalized in the containment variant of Crytser et al.~\cite{crytser2020containment}, where cops occupy edges and block the robber's traversal. Related restrictive models have also been studied in other settings, such as the face-based variant for planar graphs introduced by Jungeblut et al.~\cite{ha2023face}; see also~\cite{schneider2023thesis}. While these works focus on limiting the robber's movement, comparatively less attention has been given to surrounding-based objectives, where the cops win by occupying all neighbors (or incident edges) of the robber. This notion was recently formalized by Jungeblut et al.~\cite{Jungeblut_2025}.

A closely related line of research is the Intruder Capture (IC) problem \cite{Flocchini_2002}, where the objective is not only to detect but also to capture the intruder in a network. The IC problem has been studied extensively in various network settings (e.g., \cite{Blin_2008,DERENIOWSKI_2011,Flocchini_2008,fomin2005,NISSE_2009}), and a comprehensive survey is available in \cite{Nisse2019}. Another related problem is Black Hole Search (BHS) \cite{Dobrev_2001}, which can be viewed as a static counterpart of infection detection: infected nodes do not propagate the infection, and the goal is to locate such a node while minimizing losses. 

The problem $\mathcal{CLR}$ lies at the confluence of the IC, BHS, and Cops and Robbers paradigms. While IC and BHS have been extensively investigated in distributed settings, the Cops and Robbers game has largely been studied from a centralized graph-theoretic perspective. To the best of our knowledge, distributed variants of pursuit–evasion games of this type have not been explored. Consequently, $\mathcal{CLR}$ opens a new avenue for studying pursuit–evasion strategies under distributed computational constraints, such as limited visibility, restricted communication, and incomplete knowledge of the network.

\subsection{Preliminaries and notations}
 For a graph $G$, the cop number $c(G)$ is the smallest integer $k$ such that $k$ cops have a winning strategy against a single robber on $G$.

We now define the notions of \emph{boundary vertices}, \emph{safe zone}, and \emph{unsafe zone}. At any round $t$, the \emph{boundary vertices} are the vertices closest to the robber that contain at least one cop. In general, there are exactly two boundary vertices, denoted by $b^{t}_1$ and $b^{t}_2$. However, if all cops are located at a single vertex, then there is exactly one boundary vertex. The \emph{safe zone} at round $t$ is the segment of the ring between the boundary vertices, including the boundary vertices, that does not contain the robber. Let $Z_t$ denote the set of vertices in the safe zone. Observe that the robber cannot occupy any vertex of the safe zone. Moreover, if $|Z_t|=n-1$, then the robber is surrounded by the cops. We denote the length of the safe zone as $\bar{\rho}$, where $\bar{\rho}=|Z_t|-1$. The \emph{unsafe zone} at round $t$ is the segment between the boundary vertices that contains the vertex occupied by the robber. The robber may occupy any vertex of the unsafe zone except the boundary vertices. We denote by $\rho$ the length of the unsafe zone in the initial configuration and define it as $\rho=n-\bar{\rho}$. Figure~\ref{fig:safe_zone} illustrates these notions.
\begin{figure}[h]
    \centering
    \includegraphics[width=0.4\linewidth]{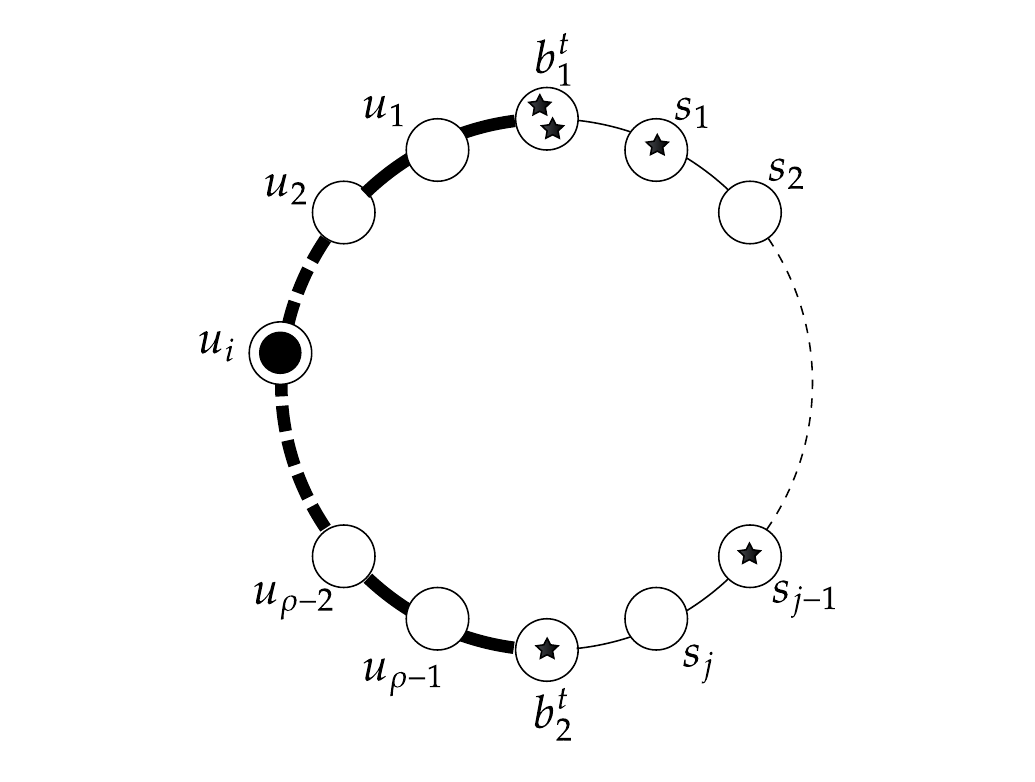}
   \caption{A configuration at round $t$ with boundary vertices $b_1^t$ and $b_2^t$. The segment $S_t=(b_1^t,s_1,s_2,\dots,s_j,b_2^t)$ denotes the safe zone (thin line), while $U_t=(u_1,u_2,\dots,u_{\rho-1})$ denotes the unsafe zone (thick line). The filled black circle at $u_i$ represents the robber, and a star inside a vertex represents a cop occupying that vertex.}
    \label{fig:safe_zone}
\end{figure}

\subsection{Our contribution}
In this work, we introduce the concept of lethal robber and study the Cops and Lethal Robber $\mathcal{CLR}$ game on the $n$-vertex ring $C_n$ in a distributed setting. 

We first show that $c(C_n)\geq n+1$ (refer  Theorem \ref{th:imp}). We then present a deterministic distributed algorithm that uses $\max\{\rho+\lfloor \log \rho \rfloor+4,\; n-\rho+4\}$ many cops starting from an arbitrary initial configuration, and successfully surrounds the robber (refer Theorem \ref{th:main}). Consequently,
$c(C_n)\le \max\{\rho+\lfloor \log \rho \rfloor+4,\; n-\rho+4\}$. Further, each cop requires $O(\log n)$ bits of memory, and the algorithm terminates within $O(n^3)$ rounds.

Our study also connects the IC and BHS problems. In IC, the intruder is dynamic, arbitrarily fast, and its location is unknown, but it is not lethal. In contrast, BHS considers a lethal but static node. Our model combines these two aspects by introducing a dynamic, arbitrarily fast, and lethal adversary, and can thus be viewed as a dynamic variant of BHS, a setting that, to the best of our knowledge, remains unexplored (see Section~\ref{sec:disc}).

\section{Impossibility with $n$ cops}\label{sec:imp}

In this section, we provide an impossibility result for the $\mathcal{CLR}$ problem, where $n \geq 4$. The intuition behind the result is that the robber, due to its unbounded speed and complete knowledge of the cops' strategy, can eliminate cops whenever they attempt to expand the safe zone. Consequently, exploring new vertices necessarily incurs losses.

Consider an initial configuration in which all $n$ cops are co-located at a single vertex, which is a special case of an arbitrary initial configuration. We analyze how the safe zone evolves as the cops attempt to explore new vertices. First, we bound the rate at which the safe zone can expand. We then show that every such expansion allows the robber to eliminate cops. Finally, combining these observations, we derive a lower bound on the number of cops required to successfully surround the robber.

\begin{lemma}\label{lem:atmax_1_inc}
For any even round $t$, the robber can always restrict the expansion of the safe zone to at most one new vertex in round $t+2$. Consequently, $|Z_{t+2}| \leq |Z_t|+1$.
\end{lemma}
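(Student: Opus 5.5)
The plan is to exhibit an explicit robber response in the odd round $t+1$ and then track what the cops can achieve in round $t+2$. I will argue directly from the movement rules; no earlier result is needed.

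\textbf{Setup.} Fix an even round $t$ with safe zone $Z_t$ and boundary vertices $b^t_1,b^t_2$, possibly equal. The unsafe zone consists of the vertices strictly between the boundaries on the side containing the robber. It is a path $u_1,\dots,u_m$ with $u_1$ adjacent to $b^t_1$ and $u_m$ adjacent to $b^t_2$. If $m\le 1$, the robber is already surrounded or cannot be prevented from being surrounded. In that case the bound holds trivially, because the safe zone can grow by at most the single remaining vertex. So assume $m\ge 2$.

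\textbf{Robber's response and the main observation.} In round $t+1$ the robber knows all cop positions and the deterministic algorithm. It can therefore compute exactly which cops will move in round $t+2$ and where they will go. In round $t+2$ each cop moves at most one edge. Hence the only vertices outside $Z_t$ that a cop can newly occupy are $u_1$, reached from $b^t_1$, and $u_m$, reached from $b^t_2$. A cop in the interior of $Z_t$ cannot leave $Z_t$ in one step. So $Z_{t+2}$ can exceed $Z_t$ only by adding $u_1$ and/or $u_m$, and the safe zone is exactly the cop-occupied segment not containing the robber.

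\textbf{Case split.} The robber moves, using its unbounded speed along the cop-free path $u_1,\dots,u_m$, according to three cases.
\begin{itemize}
\item If the algorithm sends cops into both $u_1$ and $u_m$, the robber moves to $u_m$. The cops entering $u_m$ are destroyed, so at most $u_1$ joins the safe zone.
\item If cops enter exactly one of $u_1,u_m$, the robber places itself at the other endpoint or anywhere away from the entered vertex. Again only one vertex is added.
\item If no cops leave $Z_t$, nothing is added.
\end{itemize}
These moves are legal because $u_1,\dots,u_m$ contain no cops at round $t+1$, and the robber can traverse the whole path in one round.

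\textbf{Main obstacle and conclusion.} The step needing the most care is accounting for cop deaths, since they may shrink the safe zone rather than enlarge it. The destroyed cops at $u_m$ leave no trace and occupy nothing. Losing cops at $b^t_2$ can only shrink the boundary, because $u_m$, the robber's vertex, is not added. A second subtlety is the single-boundary (rooted) case $b^t_1=b^t_2$: here $u_1$ and $u_m$ are the two neighbours of the common vertex, and the same argument applies. Combining the cases gives $|Z_{t+2}|\le |Z_t|+1$.
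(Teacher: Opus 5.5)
Your proof is correct and follows essentially the same argument as the paper. In one cop move, only the two unsafe-zone neighbours of $b^t_1$ and $b^t_2$ can be newly occupied, and the robber, predicting the deterministic moves, sits on one targeted vertex so that at most one expansion attempt succeeds. Your extra bookkeeping only makes explicit what the paper leaves implicit: the interior vertices $u_2,\dots,u_{m-1}$ stay cop-free, destroyed cops occupy nothing, and the rooted case $b^t_1=b^t_2$ is handled the same way.
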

\begin{proof}
At any even round $t$, the safe zone can expand only through the boundary vertices $b_1^t$ and $b_2^t$. Since each cop can move at most one hop per round, the cops can attempt to expand the safe zone by moving to the vertices adjacent to $b_1^t$ and $b_2^t$ in the unsafe zone during round $t+2$. Thus, they can attempt to add at most two new vertices to the safe zone.

If the cops attempt to expand the safe zone from only one boundary vertex, then the statement is immediate. Therefore, suppose they attempt to expand from both boundary vertices.
Since the robber has complete knowledge of the deterministic strategy of the cops, at round $t+1$ it knows which adjacent unsafe-zone vertices the cops will attempt to occupy from each boundary vertex in round $t+2$. The robber then moves to one of these two target vertices. Since the cops do not know the robber's location, so they execute the same moves, and the cop attempting to enter the occupied vertex is eliminated. Consequently, one of the two expansion attempts fails.
Therefore, regardless of the strategy of the cops, the robber can always restrict the expansion of the safe zone to at most one new vertex in round $t+2$. Hence, $|Z_{t+2}| \leq |Z_t|+1.$
\end{proof}

\begin{lemma}\label{lem:lossof1cop}
Let the initial configuration be rooted. For any $i\in[1,n-3]$ and any even round $t$, if $|Z_t|=i+1$, then by round $t$, the robber can always eliminate at least $i$ cops.
\end{lemma}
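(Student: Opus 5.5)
We argue by induction on $i$, following the robber strategy of Lemma~\ref{lem:atmax_1_inc}: at each odd round $t+1$ the robber looks at the unsafe-zone vertices the cops will enter at round $t+2$. If only one such vertex exists, it moves onto that vertex; if two exist, it moves onto one of them. The robber can always do this because the unsafe zone is connected: it is a path between the two boundary vertices that contains no cop, and the robber has unbounded speed. Its new position also stays in the unsafe zone, which is needed for the induction. With this strategy, every attempt to add a new vertex to the safe zone costs at least one cop. We will show that the robber can still never be prevented from letting exactly one attempt succeed when the cops commit cops to two fronts, and this is enough to obtain the loss count.

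\textbf{Base case.} Since the configuration is rooted, $|Z_0|=1$ and there is a single boundary vertex $b$. For $|Z_t|=2$, some cop must have stepped from $b$ onto a neighbour $u$ that is not occupied by the robber. The first time cops leave $b$, they go to one or both neighbours of $b$. If they go to only one neighbour, the robber sits there at the preceding odd round; this is legal since $n\ge 4$ leaves free vertices. Those cops die and $Z$ does not grow. So the first growth can only happen when cops go to both neighbours simultaneously, or in a later attempt. In either case the robber has occupied one target in that round, so at least one cop dies in the same round that $Z$ first grows. Hence $|Z_t|=2$ implies at least $1$ cop has been eliminated.

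\textbf{Inductive step.} Suppose $|Z_t|=i+1\ge 2$ and $i\le n-3$. By Lemma~\ref{lem:atmax_1_inc}, $Z$ grows by at most one vertex per cop move. Consider the round $s\le t$ at which $|Z_s|$ first equals $i+1$. At round $s-2$ we had $|Z_{s-2}|=i$, so by induction at least $i-1$ cops were already dead. At round $s$ the cops attempted to expand.
\begin{itemize}
\item If they attempted from both boundaries, the robber blocked one attempt and killed at least one cop.
\item If they attempted from only one boundary, the robber would have occupied that target and $Z$ would not have grown, a contradiction.
\end{itemize}
So at least one further cop dies at round $s$, giving at least $i$ dead cops by round $s\le t$.

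The condition $i\le n-3$ guarantees that the unsafe zone at round $s-1$ has at least two interior vertices, namely $n-|Z_{s-2}|\ge 3$ vertices outside the safe zone. This ensures the two targets are distinct vertices and that the robber can legally reach either one without being surrounded. The main obstacle is exactly this case analysis.
\begin{itemize}
\item When the unsafe zone shrinks to a single interior vertex, the two targets coincide and the robber would be surrounded. The bound $i\le n-3$ rules this out.
\item We must make sure that safe-zone growth cannot come from cops arriving at a vertex the robber was forced to vacate. Because the robber only ever moves to a target vertex, the vertices it vacates are never entered by cops in the same round, so this does not happen.
\end{itemize}
Note also that earlier unsuccessful single-sided attempts only increase the number of dead cops, so they do not affect the bound.
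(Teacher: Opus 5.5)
Your proof is correct and follows essentially the same route as the paper. Both use induction on $i$, invoke Lemma~\ref{lem:atmax_1_inc} to isolate the round where the safe zone first reaches $i+1$ vertices, and observe that a one-sided expansion attempt is simply blocked while a two-sided one succeeds on at most one side at the cost of at least one cop. Your explicit check that $i\le n-3$ keeps the two target vertices distinct and non-adjacent fills in a detail the paper leaves implicit; the remark about vacated vertices is unnecessary, since the safe zone can grow only when cops enter unsafe-zone vertices.
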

\begin{proof}
We prove the statement by induction on $i$.

\noindent\textbf{Base Case:}
Suppose that for some even round $t$, we have $|Z_t|=2$. By Lemma~\ref{lem:atmax_1_inc}, there exists an even round $t'\le t$ at which the safe zone expands from one vertex to two vertices.
If the cops attempt to expand the safe zone from only one boundary vertex, then the robber moves to the corresponding target vertex and eliminates the cop attempting to enter that vertex. Consequently, the safe zone does not expand. Therefore, to increase the safe zone from one vertex to two vertices, the cops must attempt to expand from both boundary vertices.

Since the robber knows the deterministic strategy of the cops, it can move to one of the target vertices and eliminate the cop attempting to enter that vertex in round $t'$. Hence, at least one cop is eliminated while the safe zone expands to two vertices. Thus, the statement holds for $i=1$.

\noindent\textbf{Inductive Step:}
Let $k\in[1,n-4]$, and assume that whenever $|Z_r|=k+1$ for some even round $r$, the robber can eliminate at least $k$ cops by round $r$.

Now consider an even round $t$ such that $|Z_t|=k+2$. By Lemma~\ref{lem:atmax_1_inc}, there exists an even round $t'\le t$ at which the safe zone expands from $k+1$ vertices to $k+2$ vertices.
As in the base case, the cops must attempt to expand from both boundary vertices; otherwise, the robber eliminates the cop attempting to enter the target vertex and the safe zone does not expand. Since the robber knows the deterministic strategy of the cops, it moves to one of the target vertices and eliminates the cop attempting to enter that vertex in round $t'$. Thus, one additional cop is eliminated while the safe zone expands from $k+1$ to $k+2$ vertices.

By the induction hypothesis, the robber has already eliminated at least $k$ cops before round $t'$. Therefore, by round $t$, it has eliminated at least $k+1$ cops.
Hence, by induction, whenever $|Z_t|=i+1$, the robber can eliminate at least $i$ cops by round $t$.
\end{proof}

\begin{lemma}\label{lem:atleast4alive}
Let $t$ be an even round such that $|Z_t| = n-2$ and $|Z_{t+2}| = n-1$. Then, at least four cops must be alive at round $t$.
\end{lemma}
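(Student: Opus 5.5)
At round $t$ we have $|Z_t|=n-2$, so the unsafe zone has exactly two interior vertices, say $u_1$ adjacent to $b_1^t$ and $u_2$ adjacent to $b_2^t$, with $u_1$ adjacent to $u_2$. The robber sits on one of them. To reach $|Z_{t+2}|=n-1$, exactly one of $u_1,u_2$ must become safe while the robber is forced onto the other, which is then surrounded. The plan is to trace, from the robber's point of view, what the cops must do in round $t+2$ for this to happen regardless of the robber's choice in round $t+1$. Then I will count how many cops this requires.

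\textbf{Step 1 (what the cops must attempt).} As in the proof of Lemma~\ref{lem:atmax_1_inc}, in round $t+1$ the robber knows which of $u_1,u_2$ the cops will try to enter in round $t+2$, and it may sit on either of them. If the cops attempt to enter only $u_1$, the robber moves to $u_1$; the entering cops die and the safe zone does not grow. The case of only $u_2$ is symmetric. Hence cops must attempt to enter both $u_1$ (from $b_1^t$) and $u_2$ (from $b_2^t$). The robber then sits on one of them, say $u_1$, killing the cops that enter it. The safe zone becomes $Z_t\cup\{u_2\}$, and we need $|Z_{t+2}|=n-1$ with the robber surrounded at $u_1$.

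\textbf{Step 2 (occupancy needed after the move).} For the robber at $u_1$ to be surrounded in round $t+2$, both neighbours of $u_1$ must be occupied. These are $b_1^t$ and $u_2$. So:
\begin{itemize}
\item at least one cop must move $b_1^t\to u_1$ (needed against the symmetric robber choice);
\item at least one cop must stay at $b_1^t$;
\item at least one cop must move $b_2^t\to u_2$;
\item by the symmetric scenario (robber on $u_2$), at least one cop must stay at $b_2^t$.
\end{itemize}
Since the strategy is deterministic and fixed before the robber's choice, all four roles must be filled simultaneously by distinct cops. Here $b_1^t\neq b_2^t$ because $n\ge 4$ gives $|Z_t|\ge 2$. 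Therefore at least four cops are alive at round $t$.

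\textbf{Main obstacle.} The delicate point is the ``stay'' requirement. One might suggest that a cop arriving at $b_1^t$ from inside the safe zone, instead of staying, could keep $b_1^t$ occupied. I would handle this by noting that we only need $b_1^t$ to be occupied at the end of round $t+2$, by whatever cop. Every cop present there is distinct from the cop that moves into $u_1$, and likewise on the $b_2^t$ side. So four distinct cops are needed regardless of where they come from. I would also note that $b_1^t$ and $b_2^t$ stay distinct and non-adjacent in the relevant sense, since $u_1,u_2$ lie between them. This ensures that no single cop can serve two of the four roles.
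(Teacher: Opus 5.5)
Your argument is correct, and it takes a different route from the paper's.

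\textbf{The paper's route.} It argues by contradiction. Assuming at most three alive cops, each of $b_1^t,b_2^t$ holds a cop, so there is a single spare cop. The paper places it without loss of generality at or next to $b_1^t$. It then checks the expansion patterns: an attempt from $b_2^t$ or from both sides, versus an attempt only from $b_1^t$. In each case it exhibits a robber position that defeats the expansion.

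\textbf{Your route.} You argue directly. Since the cops cannot sense the robber, their round-$(t+2)$ moves form one fixed profile that must work against both robber choices $u_1$ and $u_2$. Surrounding at $u_1$ requires cops to end the round on $b_1^t$ and $u_2$. Surrounding at $u_2$ requires cops to end the round on $b_2^t$ and $u_1$. These four vertices are distinct and each cop ends on exactly one vertex, so four cops are needed.

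\textbf{What each buys.} Your version gives a cleaner pigeonhole count and treats cops arriving from inside the safe zone uniformly. This avoids the paper's ``without loss of generality'' placement of the spare cop, which needs an extra word when $n=5$, because there the unique interior safe vertex is adjacent to both boundary vertices. Your version also makes explicit why killing the $b_2^t$ cop blocks the expansion: it leaves $b_2^t$ empty. The paper leaves that implicit. The paper's version keeps the same robber-kills-the-mover style as Lemmas~\ref{lem:atmax_1_inc} and~\ref{lem:lossof1cop}.

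\textbf{Minor remarks.} Your Step~1 is subsumed by Step~2. The ``non-adjacent'' comment is unnecessary and not quite accurate: for $n=4$ the two boundary vertices are adjacent through the safe zone. Only the distinctness of $u_1,u_2,b_1^t,b_2^t$ matters.
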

\begin{proof}
Assume, for contradiction, that at most three cops are alive at round $t$. Since $|Z_t|=n-2$, the two boundary vertices $b^{t}_1$ and $b^{t}_2$ must each contain at least one cop. Hence, if at most three cops are alive at round $t$, then besides the cops occupying the boundary vertices $b^{t}_1$ and $b^{t}_2$, there can be at most one additional cop in the safe zone.
If this third cop is not located at a boundary vertex or at a vertex adjacent to a boundary vertex inside the safe zone, then it cannot contribute to the expansion of the safe zone in round $t+2$, since each cop can move at most one hop during a cops' turn. Therefore, without loss of generality, assume that the third cop is positioned either at $b^{t}_1$ or at a vertex adjacent to $b^{t}_1$ inside the safe zone.

Now, in order to expand the safe zone from $n-2$ vertices to $n-1$ vertices in round $t+2$, the cops must attempt to occupy a new vertex outside $Z_t$ from at least one of the boundary vertices. If the expansion is attempted from $b^{t}_2$ (or from both boundary vertices), then at round $t+1$ the robber positions itself on the vertex that a cop from $b^{t}_2$ will attempt to occupy in round $t+2$. Since the robber knows the deterministic strategy of the cops, it can predict this movement in advance and eliminate the moving cop in round $t+2$. On the other hand, if the expansion is attempted only from $b^{t}_1$, then the robber similarly positions itself to eliminate the cop moving from $b^{t}_1$.
Thus, in every possible case, the robber can prevent the cops from expanding the safe zone to $n-1$ vertices in round $t+2$. This contradicts our assumption that at most three cops are alive at round $t$ and yet the safe zone expands to $n-1$ vertices in round $t+2$. Therefore, at least four cops must be alive at round $t$.
\end{proof}

\begin{theorem}\label{th:imp}
For the $\mathcal{CLR}$ problem on $C_n$ with $n \geq 4$, we have $c(C_n)\geq n+1$.
\end{theorem}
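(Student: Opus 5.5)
We argue by contradiction. Suppose some deterministic distributed strategy lets $n$ cops surround the lethal robber on $C_n$, $n\ge 4$, from every initial configuration. It suffices to defeat it on a single configuration, so we take the rooted configuration with all $n$ cops at one vertex. There the safe zone starts with $|Z_0|=1$, and the cops win exactly when the safe zone reaches $n-1$ vertices. We will show that by then too many cops have been killed.

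First we track the growth of the safe zone. By Lemma~\ref{lem:atmax_1_inc}, $|Z_t|$ increases by at most one per cop move. Hence any winning run has an even round $t$ with $|Z_t|=n-2$ and $|Z_{t+2}|=n-1$. Before $t$, the zone passes through every size $2,3,\dots,n-2$.

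Next we count the losses up to round $t$. Take $i=n-3$, which lies in the allowed range $[1,n-3]$ since $n\ge 4$. Lemma~\ref{lem:lossof1cop} then says the robber can eliminate at least $n-3$ cops by round $t$. So at most $n-(n-3)=3$ cops are alive at round $t$. But Lemma~\ref{lem:atleast4alive} requires at least four living cops at round $t$ for the expansion from $n-2$ to $n-1$ vertices to succeed. This is a contradiction, so $n$ cops cannot win and $c(C_n)\ge n+1$.

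The delicate point is that all three lemmas must be realized by one robber strategy in the same play. The robber's rule is: whenever the cops attempt an expansion, stand on a target vertex and kill the cop entering it. This is exactly the behaviour used in Lemmas~\ref{lem:atmax_1_inc} and~\ref{lem:lossof1cop}, and it also covers the final step of Lemma~\ref{lem:atleast4alive}, so the robber never needs to switch strategies. We also need a safe-zone shrinkage never to be exploitable. This follows from the same counting: each net gain of a vertex still costs a cop, so the number of eliminations is at least the number of net expansions, which is at least $n-3$.

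The case of fewer than $n$ cops follows by monotonicity: removing cops only weakens the cops' side, so they cannot win either.
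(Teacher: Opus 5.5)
Your proposal is correct and follows the paper's proof essentially step for step. You use the rooted configuration, Lemma~\ref{lem:atmax_1_inc} to force a round $t$ with $|Z_t|=n-2$ and $|Z_{t+2}|=n-1$, Lemma~\ref{lem:lossof1cop} with $i=n-3$ to count the losses, and Lemma~\ref{lem:atleast4alive} to demand four survivors, giving the contradiction $n-(n-3)=3<4$.

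Your remarks on a single consistent robber strategy and on shrinkage are sound points the paper leaves implicit, with one caveat: in the final step the robber's target must be chosen as in Lemma~\ref{lem:atleast4alive}, not arbitrarily. The appeal to monotonicity for fewer cops is unnecessary, since for any $k\le n$ the same count leaves at most $k-(n-3)\le 3$ cops alive.
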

\begin{proof}
Consider the rooted initial configuration in which all cops are initially located at a single vertex $v$. Then $|Z_0|=1$. To obtain a winning configuration, the cops must eventually reach an even round $t$ such that $|Z_t|=n-1$.

By Lemma~\ref{lem:atmax_1_inc}, the cops can expand the safe zone by at most one vertex in each cop's turn. Hence, before reaching a configuration with $|Z_t|=n-1$, the cops must first reach a configuration with $|Z_t|=n-2$. Now, Lemma~\ref{lem:lossof1cop}, which is proved for the rooted initial configuration, implies that by the time the safe zone expands to $n-2$ vertices, at least $n-3$ cops can be destroyed. Further, by Lemma~\ref{lem:atleast4alive}, at least four cops must be alive in order to expand the safe zone from $n-2$ vertices to $n-1$ vertices. Therefore, the total number of cops required is at least $(n-3)+4=n+1$. Hence, $c(C_n)\geq n+1.$ 
\end{proof}

\section{Cops and lethal robber}
In this section, we present \texttt{Capture-LR}, a deterministic distributed algorithm that surrounds the robber on $C_n$ whenever at least $\max\{\rho + \lfloor \log \rho \rfloor + 4,\; n - \rho + 4\}$ cops initiate execution from an arbitrary initial configuration, where $\rho$ denotes the length of the unsafe zone in the initial configuration. The cops have no prior knowledge of the robber's location or each other's positions. Throughout the paper, whenever we say that a cop performs an action in the next or a subsequent round, we mean the next even-numbered round in which the cops are permitted to move. We now present the high-level idea of the algorithm.

\medskip
\noindent\textbf{High-level idea.}
Initially the cops are arbitrarily distributed on the ring and have no knowledge of the robber's location or the boundary of the safe zone. The proposed algorithm proceeds in two steps. In Step~1, the objective is to identify the two boundary vertices of the safe zone and gather all surviving cops at these vertices. To achieve this, the cops collaboratively probe unexplored directions and from the outcomes of these probes, determine whether a direction is safe or leads toward the robber. Whenever both directions from an occupied vertex are verified to be safe, the available cops are redistributed to continue the exploration from other occupied vertices. This process continues until both boundary vertices are identified. At the completion of Step~1, one cop remains at each boundary vertex, while all remaining surviving cops are distributed between these two vertices as evenly as possible.

At the beginning of Step~2, every cop computes the same upper bound $\hat{\rho}$ of the unsafe zone, i.e., $\hat{\rho}\geq\rho$. Step~2 consists of at most $h=\lfloor\log\hat{\rho}\rfloor$ phases. Each phase, except the last, consists of three sub-phases: advancing, checking, and balancing, while the last phase contains only the advancing sub-phase. During the advancing sub-phase, a carefully chosen number of cops simultaneously advance from both boundary vertices into the unsafe zone. The checking sub-phase determines the actual extension achieved on each side and updates the boundary vertices accordingly. Finally, the balancing sub-phase redistributes the remaining cops between the new boundary vertices so that the next phase begins under the same conditions. Repeating this process progressively increases the safe zone until it spans the entire ring, and thereby surrounds the robber.

\subsection{Algorithm: \texttt{Capture-LR}}

In this section we present our algorithm \texttt{Capture-LR} along with the pseudo-codes.

\noindent\textbf{Step~1 (Determining the boundary).}
In Step~1, each cop assumes one of four roles: \texttt{settler}, \texttt{guard}, \texttt{helper}, or \texttt{checker}. The objective is to identify the two boundary vertices of the safe zone. At the completion of this step, exactly two cops serve as \texttt{guards}, one at each boundary vertex, and all remaining alive cops are \texttt{helpers} located at these vertices.

Initially, at every occupied vertex, the cop with the minimum identifier assumes the role of \texttt{settler}, while all remaining cops at that vertex, if any, become \texttt{helpers}.
Throughout Step~1, every cop $c$ maintains the parameters $c.role\in\{\texttt{settler},\texttt{guard},\texttt{helper},\texttt{checker}\}$ and $c.step\in\{1,2\}$, representing its current role and the current step of the algorithm, respectively. In addition, $c.\bar{\rho}$ stores the length of the safe zone in the initial configuration, and $c.N$ stores the total number of alive cops at the completion of Step~1. Initially, $c.step=1$, $c.\bar{\rho}=\perp$, and $c.N=\perp$.

Each \texttt{settler} and \texttt{guard} maintain the following parameters for each port $p \in \{0,1\}$. The parameter $c.safe_p \in \{\perp,0,1\}$ represents the safety status of port $p$, where $\perp$ denotes unknown, $1$ denotes safe, and $0$ denotes unsafe. The parameter $c.checkID_p$ stores the identifier of the \texttt{checker} assigned to port $p$, and equals $\perp$ when none is assigned. The parameter $c.dist_p \in \mathbb{N}$ denotes the current probing distance for port $p$. The parameter $c.time_p \in \mathbb{N}$ counts the rounds elapsed since the \texttt{checker} assigned to port $p$ last departed from $v$. Initially, $c.safe_p = \perp$, $c.checkID_p = \perp$, $c.dist_p = 0$, and $c.time_p = 0$ for both ports.

Each \texttt{checker} $c$ maintains two additional parameters: $c.dist_p \in \mathbb{N}$ and $c.operatorID$. The parameter $c.dist_p$ stores the distance currently being probed through port $p$, initialized to $1$ at the start of each checking task. The parameter $c.operatorID$ stores the identifier of the \texttt{settler} or \texttt{guard} that assigned $c$ to its current checking task, set at assignment and unchanged throughout.

Each \texttt{guard} additionally maintains three parameters: $c.lead \in \{\perp,0,1\}$, $c.uport \in \{0,1,\perp\}$, and $c.balance \in \{\perp,1\}$, all initialized to $\perp$. Upon learning the other \texttt{guard}'s identifier, $c$ sets $c.lead \leftarrow 1$ if its own identifier is smaller, and $c.lead \leftarrow 0$ otherwise. The parameter $c.uport$ stores the port leading toward the unsafe zone from $c$'s current vertex. The parameter $c.balance$ is set to $1$ by the \texttt{guard} with $c.lead = 1$ once it has confirmed that all surviving cops, except the other \texttt{guard}, have arrived at its vertex. We now define the procedures used in the pseudo-code of Step~1.
\begin{itemize}
    \item $\textsc{Assign}(p)$: A \texttt{settler} or \texttt{guard} $c$ at vertex $v$ executing this procedure selects the minimum-identifier \texttt{helper} at $v$, say $c_h$, and sets $c.checkID_p\leftarrow c_h.\mathrm{ID}$, $c.dist_p\leftarrow1$, and $c.time_p\leftarrow0$.

    \item $\textsc{MoveTill}(p,\mathit{role})$: A cop $c$ at vertex $v$ executing this procedure exits through port $p$ and continues in the same direction until it reaches a vertex occupied by a cop $c'$ with $c'.role=\mathit{role}$.

    \item $\textsc{Divide}()$: Let $\alpha$ denote the number of cops at the current vertex. A cop $c$ executing this procedure exits through port $0$ if it is among the $\lceil\alpha/2\rceil$ minimum-identifier cops at the current vertex, and through port $1$ otherwise. In either case, it continues in the chosen direction until it reaches a \texttt{settler} or a \texttt{guard}.

    \item $\textsc{Wait}(t)$: A cop $c$ executing this procedure remains idle for $t$ consecutive rounds, including the current round. Equivalently, if $c$ begins executing $\textsc{Wait}(t)$ in round $r$, it resumes execution of the subsequent statement at the beginning of round $r+t$.

    \item $\textsc{Check}(p,i)$: A cop $c$ at vertex $v$ executing this procedure traverses through port $p$, advancing one hop per round until reaching distance $i$, and then returns to $v$ along the same path. If another cop is encountered during the forward traversal, $c$ immediately returns to $v$. The parameter $c.meet_p$, initially $0$, is set to $1$ if such an encounter occurs.

    \item $\textsc{Extended\_Check}(p,i)$: A cop $c$ at vertex $v$ executing this procedure behaves as in $\textsc{Check}(p,i)$, except that it records the observed interaction using the parameter $c.enctype_p$, initially $0$. The value $0$ indicates that no cop, or only a \texttt{helper} or \texttt{checker}, is encountered. The value $1$ indicates that a \texttt{settler} currently having, or previously having had, at least one assigned \texttt{checker} is encountered. The value $2$ indicates that a \texttt{settler} that has never had an assigned \texttt{checker} is encountered. The value $3$ indicates that a \texttt{guard} is encountered, in which case its identifier is stored in $c.encID$. Throughout the traversal, $c$ sets $c.task\leftarrow\texttt{extcheck}$ so that any encountered cop can identify that it is executing this procedure.
\end{itemize}

\smallskip
\noindent\textbf{Algorithm for \texttt{checker}} (see Algorithm~\ref{alg:step1_checker}).
A \texttt{checker} is a \texttt{helper} temporarily assigned by the \texttt{settler} or the \texttt{guard} to probe a specific direction up to a specific distance.
If assigned by the \texttt{settler}, the \texttt{checker} first checks whether the port it is probing has already been marked safe; if so, it reverts to being a \texttt{helper}. Otherwise, it checks whether it encountered another cop during its last probe. If it did, the direction is safe, and the \texttt{checker} reverts to being a \texttt{helper}. If it did not, the \texttt{checker} probes one hop further. If assigned by the \texttt{guard}, the \texttt{checker} probes one hop further each time it returns by incrementing $c.dist_p$ and executing $\textsc{Extended\_Check}$, unless it met the other \texttt{guard} during its last probe; in that case, it reverts to being a \texttt{helper}.

\begin{algorithm}[!htb]
\DontPrintSemicolon
\LinesNumbered
\caption{\texttt{Capture-LR} : Step~1 (\texttt{checker})}
\label{alg:step1_checker}
Let $c$ be any cop at vertex $v$ with $c.step=1$ and $c.role=\texttt{checker}$.\;
\If{a cop $c_s$ with $c_s.checkID_p=c.\mathrm{ID}$, for some port $p$, is present at $v$}{
    \uIf{$c_s.role=\texttt{settler}$}{
        \uIf{$c_s.safe_p=1$}{
            $c.role\leftarrow\texttt{helper}$\;
        }
        \Else{
            \uIf{$c.meet_p=1$}{
                $c.role\leftarrow\texttt{helper}$\;
            }
            \Else{
                $c.dist_p\leftarrow c.dist_p+1$ and execute $\textsc{Check}(p,c.dist_p)$\;
            }
        }
    }
    \Else(\tcp*[f]{$c_s$ is a \texttt{guard}}){
        \uIf{$c.enctype_p\neq3$}{
            $c.dist_p\leftarrow c.dist_p+1$ and execute $\textsc{Extended\_Check}(p,c.dist_p)$\;
        }
        \Else{
            $c.role\leftarrow\texttt{helper}$\;
        }
    }
}
\end{algorithm}
\begin{algorithm}[!htb]
\DontPrintSemicolon
\LinesNumbered
\caption{\texttt{Capture-LR} : Step~1 (\texttt{helper}) }
\label{alg:step1_helper}
Let $c$ be any cop at vertex $v$ with $c.step=1$ and $c.role=\texttt{helper}$. Let $c_s$ be the \texttt{settler} or \texttt{guard} at $v$\;

\uIf{$c_s.safe_0=0$ and $c_s.safe_1=0$}{
    $c.N\leftarrow$ total cops at $v$, \: $c.\bar\rho\leftarrow 0$\;
    \If{$c$ is the minimum-ID \texttt{helper} at $v$}{
        $c.lead\leftarrow 0$, \: $c.uport\leftarrow 1$, \: $c.role\leftarrow\texttt{guard}$\;
    }
    $c.step\leftarrow 2$\;
}
\ElseIf{$c_s.safe_0=1$ and $c_s.safe_1=1$}{
    \uIf{$c_s.checkID_p\ne \perp$, for any port $p \in \{0,1\}$}{
        remain idle until $c_s.checkID_p = \perp$\;
    }
    execute $\textsc{Divide}()$\;
}
\Else{
    \uIf{$c_s.role=\texttt{settler}$}{
        \uIf{$c_s.safe_p=\perp$ and $c_s.checkID_p= \perp$ for both $p=0,1$}{
            \uIf{$c$ is the minimum-ID \texttt{helper} present at $v$}{
                $c.role\leftarrow\texttt{checker}$ and execute $\textsc{Check}(0,1)$\;}
            \ElseIf{$c$ is the second minimum-ID \texttt{helper} present at $v$}{
                $c.role\leftarrow\texttt{checker}$ and execute $\textsc{Check}(1,1)$\;}
        }
        \ElseIf{$c_s.safe_p=\perp$ and $c_s.checkID_p= \perp$ for exactly one  port $p$}{
            \uIf{$c$ is the minimum-ID \texttt{helper} present at $v$}{
                $c.role\leftarrow\texttt{checker}$ and execute $\textsc{Check}(p,1)$\;
            }
        }
      
    }
    \Else{
        \uIf{$c_s.balance=1$}{
            $c.N\leftarrow(\text{cops at }v')+1$,\: $c.\bar\rho\leftarrow c_s.\bar\rho$\; 
            \uIf{$c$ is among the $\lceil(c.N-2)/2\rceil$ minimum-ID helpers}{
                execute $\textsc{Wait}(c.\bar \rho)$\;
            }
            \Else{
                execute $\textsc{MoveTill}(1-c_s.uport,\texttt{guard})$\;
            }
            $c.step\leftarrow 2$\;
        }
        \Else(\tcp*[f]{not all cops have gathered yet}){
            \uIf{$c_s.safe_p=\perp$ and $c_s.checkID_p= \perp$ for some port $p$}{
                \uIf{$c$ is the minimum-ID \texttt{helper} present at $v$}{
                    $c.role\leftarrow\texttt{checker}$ and execute $\textsc{Extended\_Check}(p,1)$\;
                }
            }
            \uIf{a \texttt{checker} $c_{ckr}$ is present at $v$ with $c_{ckr}.portin=p$ and $c_{ckr}.task=\texttt{extcheck}$}{
                \uIf{$c_{ckr}.operatorID \ne c_s.\mathrm{ID}$}{
                    \uIf{$c_{ckr}.operatorID < c_s.\mathrm{ID}$}{
                        execute $\textsc{MoveTill}(p,\texttt{guard})$\;
                    }    
                }
                \Else{
                    \uIf{$c_{ckr}.enctype_p=2$}{
                        \uIf{$c$ is the minimum-ID helper}{
                            execute $\textsc{MoveTill}(p,\texttt{settler})$\;}
    
                    }
                    \uElseIf{$c_{ckr}.enctype_p=3$}{
                        \uIf{$c_s.\mathrm{ID}>c_{ckr}.encID$}{
                            execute $\textsc{MoveTill}(p,\texttt{guard})$\;
                        }
                    }
                }
            }
            \uIf{$c_s.lead=0$}{
                execute $\textsc{MoveTill}(1-c_s.uport,\texttt{guard})$\;
            }
        } 
    }
}
\end{algorithm}

\smallskip
\noindent\textbf{Algorithm for \texttt{helper}} (see Algorithm~\ref{alg:step1_helper}).
A \texttt{helper} assists the \texttt{settler} or \texttt{guard} at its current vertex $v$. If both ports of $v$ are confirmed unsafe, the \texttt{helper} moves on to Step~2, and additionally, the minimum-ID \texttt{helper} takes on the role of \texttt{guard}. If both ports of $v$ are confirmed safe, the \texttt{helper} waits for any active \texttt{checker} to return and then executes $\textsc{Divide}()$. If neither condition holds, the \texttt{helper}'s behavior depends on whether it is assisting a \texttt{settler} or a \texttt{guard}. If assisting a \texttt{settler} and exactly one port has unknown safety status and no assigned \texttt{checker}, the minimum-ID \texttt{helper} at $v$ becomes the \texttt{checker} for that port. If both ports satisfy this condition, the minimum-ID and second-minimum-ID \texttt{helpers} become \texttt{checkers} for port $0$ and port $1$, respectively.

If assisting a \texttt{guard} and all surviving cops have already gathered at $v$ (the \texttt{guard} has set $balance=1$), the \texttt{helper} learns the total number of surviving cops and the waiting time from the \texttt{guard}. If it is among the smaller-ID half of the \texttt{helpers}, it waits accordingly; otherwise, it moves toward the other \texttt{guard}. Either way, it then moves on to Step~2. If assisting a \texttt{guard} and not all cops have gathered yet ($balance$ is not yet set), the \texttt{helper} does two things. First, if some port has unknown safety status and no assigned \texttt{checker}, the minimum-ID \texttt{helper} at $v$ becomes the \texttt{checker} for that port. Second, the \texttt{helper} reacts when a \texttt{checker} arrives at $v$. If this \texttt{checker} is probing on behalf of another \texttt{guard} with a smaller identifier than its own \texttt{guard}, the \texttt{helper} moves toward that \texttt{guard}. If instead the \texttt{checker} is probing on behalf of its own \texttt{guard}, the \texttt{helper} checks what it encountered. If the \texttt{checker} met a \texttt{settler} that has never had a \texttt{checker} assigned, the minimum-ID \texttt{helper} moves to assist that \texttt{settler} in probing its remaining port. If the \texttt{checker} met the other \texttt{guard}, which has the smaller identifier of the two \texttt{guards}, or the \texttt{helper} sees that its own \texttt{guard} has already set $lead=0$, then the \texttt{helper} moves toward the other \texttt{guard} through the safe zone.

\begin{algorithm}[!htb]
\DontPrintSemicolon
\LinesNumbered
\caption{\texttt{Capture-LR} : Step~1 (\texttt{settler})}
\label{alg:step1_settler}
Let $c$ be any cop at vertex $v$ with $c.step=1$ and $c.role=\texttt{settler}$.\;
\If{$c.safe_0=1$ and $c.safe_1=1$}{
    \uIf{$c.checkID_p\ne\perp$, for any port $p\in\{0,1\}$}{
        remain idle until $c.checkID_p=\perp$\;
    }
    $c.role\leftarrow\texttt{helper}$\;
}
\Else(\tcp*[f]{at least one port is not yet confirmed safe}){
    \uIf{a \texttt{checker} $c_1$ with $c_1.portin=p$ and $c_1.\mathrm{ID}\neq c.checkID_p$ is present at $v$, for some $p\in\{0,1\}$}{
        $c.safe_p\leftarrow1$\;
    }
    \For{each port $p\in\{0,1\}$ satisfying $c.checkID_p\neq\perp$}{
        \uIf{a \texttt{checker} $c'$ with $c'.\mathrm{ID}=c.checkID_p$ and $c'.portin=p$ is present at $v$}{
            \uIf{$c'.meet_p=1$}{
                $c.safe_p\leftarrow1$,\: $c.checkID_p\leftarrow\perp$\;
            }
            \uElseIf{$c'.meet_p=0$ and $c.safe_p=1$}{
                $c.checkID_p\leftarrow\perp$\;
            }
            \uElseIf{$c'.meet_p=0$ and $c.safe_p=\perp$}{
                $c.dist_p\leftarrow c.dist_p+1$,\: $c.time_p\leftarrow0$\;
            }
        }
        \ElseIf{no \texttt{checker} $c''$ satisfying $c''.\mathrm{ID}=c.checkID_p$ is present at $v$}{
            \uIf{$c.time_p<2\cdot c.dist_p$}{
                $c.time_p\leftarrow c.time_p+1$\;
            }
            \Else{
                $c.safe_p\leftarrow0$,\: $c.checkID_p\leftarrow\perp$\;
                \uIf{$c.checkID_q\ne\perp$, where $q=1-p$}{
                    remain idle until $c.checkID_q=\perp$\;
                }
                $c.role\leftarrow\texttt{guard}$\;
            }
        }
    }
    \For{each port $p\in\{0,1\}$ in order satisfying $c.safe_p=\perp$ and $c.checkID_p=\perp$}{
        \If{a \texttt{helper} $c_h$ is present at $v$ with $c_h.\mathrm{ID}\neq c.checkID_0$ and $c_h.\mathrm{ID}\neq c.checkID_1$}{
            $\textsc{Assign}(p)$\;
        }
    }
}
\end{algorithm}

\begin{algorithm}[!htb]
\DontPrintSemicolon
\LinesNumbered
\caption{\texttt{Capture-LR} : Step~1 (\texttt{guard})}
\label{alg:step1_guard}
Let $c$ be any cop at vertex $v$ with $c.step=1$, $c.role=\texttt{guard}$, and $c.safe_q=0$ for port $q$. Let $p=1-q$\;   
\uIf{$c.safe_0=0$ and $c.safe_1=0$}{
    $c.N\leftarrow$ total cops at $v$,\: $c.\bar\rho\leftarrow 0$,\: $c.lead\leftarrow 1$,\: $c.uport\leftarrow 0$,\: $c.step\leftarrow 2$\;
}
\ElseIf{$c.balance = 1$}{
    execute $\textsc{Wait}(c.\bar \rho)$\;
    $c.step\leftarrow 2$\;
}
\Else{
    \uIf{a \texttt{checker} $c'$ is present at $v$ with $c'.portin = p$, $c'.task=\texttt{extcheck}$ and  $c'.\mathrm{ID}\ne c.checkID_p$}{
        \uIf{$c'.operatorID>c.\mathrm{ID}$}{
            $c.lead\leftarrow 1$,\: $c.uport\leftarrow q$,\: $c.\bar \rho \leftarrow c'.dist_p$\;
            execute $\textsc{Wait}(2\cdot c.\bar \rho)$\;
            $c.N\leftarrow(\text{cops at }v)+1$,\: $c.balance \leftarrow 1$ \;    
        }
        \Else{
            $c.lead \leftarrow 0$,\: $c.uport\leftarrow q$,\: $c.\bar \rho \leftarrow c'.dist_p$\;
            remain idle until a cop $c''$ is present at $v$ such that $c''.N \ne \perp$\;
            $c.N\leftarrow c''.N$,\: $c.step\leftarrow 2$\;
        }  
    }
    \uIf{$c.checkID_p=\perp$}{
        \uIf{$c.lead = \perp$ and a \texttt{helper} $c_h$ is present at $v$}{
            execute $\textsc{Assign}(p)$\;
        }    
    }
    \Else(\tcp*[f]{a \texttt{checker} has already been assigned to port $p$}){
        \uIf{a \texttt{checker} $c'$ is present at $v$ with $c'.portin = p$ and $c'.\mathrm{ID}=c.checkID_p$}{
            \uIf{$c'.enctype_p=0$}{
                $c.dist_p\leftarrow c.dist_p+1$, \: $c.time_p\leftarrow 0$\;
            }
            \uElseIf{$c'.enctype_p\in\{1,2\}$}{
                $c.time_p\leftarrow 0$\;
            }
            \Else{
                $c.\bar\rho\leftarrow c.dist_p$\;
                \uIf{$c.\mathrm{ID}<c'.encID$}{
                    $c.lead\leftarrow 1$, \: $c.uport\leftarrow q$ \;
                    execute $\textsc{Wait}(2\cdot c.\bar \rho)$\;
                    $c.N\leftarrow(\text{cops at }v)+1$,\: $c.balance \leftarrow 1$ \; 
                }
                \Else{
                    $c.lead\leftarrow 0$,\: $c.uport\leftarrow q$\;
                    remain idle until a cop $c''$ is present at $v$ such that $c''.N \ne \perp$\;
                    $c.N\leftarrow c''.N$,\: $c.step\leftarrow 2$\;
                }
            }
        }
        \Else(\tcp*[f]{the assigned \texttt{checker} has not yet returned}){
            \uIf{$c.time_p < 2\cdot c.dist_p$}{$c.time_p\leftarrow c.time_p+1$\;}
            \Else{
                $c.safe_p\leftarrow 0$\;
            }
        }
    }
}
\end{algorithm}

\smallskip
\noindent\textbf{Algorithm for \texttt{settler}} (see Algorithm~\ref{alg:step1_settler}).
If any port $p$ has unknown safety status with no assigned \texttt{checker}, and at least one unassigned \texttt{helper} is available at $v$, the \texttt{settler} assigns the minimum-ID such \texttt{helper} as the \texttt{checker} for port $p$ via $\textsc{Assign}(p)$. If both ports of the \texttt{settler}'s current vertex $v$ are already confirmed safe, it becomes a \texttt{helper} if it has no assigned \texttt{checker}; otherwise, it waits for its assigned \texttt{checker} to return before becoming a \texttt{helper}. Otherwise, if at least one port is still not marked safe, it works as follows. If the \texttt{settler} sees a \texttt{checker} (not its assigned \texttt{checker}) enter through some port $p$, it marks port $p$ as safe. Otherwise, if its assigned \texttt{checker} returns through port $p$ (within a certain time window, specified later), the \texttt{settler} decides the safety status of port $p$ based on the \texttt{checker}'s report. If the \texttt{checker} encountered another cop during probing, the \texttt{settler} marks port $p$ as safe. If no cop was encountered and the port was already confirmed safe, the \texttt{settler} de-assigns the \texttt{checker} from the role of \texttt{checker} for port $p$. If the safety status of port $p$ is still unknown, the \texttt{settler} increments $c.dist_p$ and resets $c.time_p$ so that the \texttt{checker} probes one step further. If the assigned \texttt{checker} does not return within $2 \cdot c.dist_p$ rounds, the \texttt{settler} marks port $p$ as unsafe. It then waits for the assigned \texttt{checker} of the other port, if any, to return, after which the \texttt{settler} transitions to \texttt{guard}.

\smallskip
\noindent\textbf{Algorithm for \texttt{guard}} (see Algorithm~\ref{alg:step1_guard}).
The \texttt{guard} is stationed at a boundary vertex with one unsafe port $q$ and port $p=(1-q)$ whose safety is unknown. If port $p$ has no assigned \texttt{checker} and \texttt{guard} $c$ has not yet set its $lead$ parameter ($c.lead=\perp$), it assigns the minimum-ID available \texttt{helper} as the \texttt{checker} for port $p$. If both ports are marked unsafe, $c$ moves on to Step~2. If all surviving cops have already gathered at its vertex, the \texttt{guard} waits $\bar\rho$ rounds for the \texttt{helpers} to split before moving on to Step~2. If port $p$ is not yet marked unsafe, the \texttt{guard} works as follows. If a \texttt{checker} arrives at $v$ through port $p$ while executing $\textsc{Extended\_Check}$, but it is not the \texttt{checker} assigned by $c$, then it has come from the other \texttt{guard}'s vertex carrying that \texttt{guard}'s identifier. If this identifier is larger than $c$'s, $c$ sets $c.lead=1$, waits $2\cdot\bar\rho$ rounds for all cops to arrive, counts them, and marks that all cops have gathered. Otherwise, $c$ sets $c.lead=0$, waits until some cop informs it of the total number of surviving cops, and then moves on to Step~2.

If instead $c$'s assigned \texttt{checker} returns through port $p$, it decides what to do based on the \texttt{checker}'s report. If the \texttt{checker} met no cop, or met a \texttt{helper} or another \texttt{checker}, $c$ increments $c.dist_p$ and resets $c.time_p$ so that the \texttt{checker} probes one step further. If the \texttt{checker} met the \texttt{settler}, $c$ only resets $c.time_p$ so that the \texttt{checker} probes the same distance again. If the \texttt{checker} met the other \texttt{guard}, $c$ decides which of the two \texttt{guards} leads by comparing their identifiers, proceeding exactly as described above, where its action depends on whether its identifier is smaller or larger than the other \texttt{guard}'s identifier. If the assigned \texttt{checker} does not return within $2\cdot c.dist_p$ rounds, $c$ marks port $p$ as unsafe.

\medskip
\noindent\textbf{Step 2 (Expanding the safe zone).}
At the completion of Step~1, one \texttt{guard} is positioned at each boundary vertex of the safe zone, while all remaining alive cops are \texttt{helpers} distributed between these two vertices. Every cop knows the values of $\bar{\rho}$ and $N$ computed during Step~1. Let $f(x)=x+\lfloor\log x\rfloor+2$. Each cop computes $\hat{\rho}=\max\{x\in\mathbb{N}:f(x)\le N\}$, and sets $h=\lfloor\log\hat{\rho}\rfloor$. Step~2 consists of $h$ phases, indexed by $i=1,\ldots,h$. Each of the first $h-1$ phases consists of three sub-phases: advancing, checking, and balancing, while the last phase consists only of the advancing sub-phase, after which the robber is surrounded (see Lemma~\ref{lem:checker-availability}).

Every cop $c$ maintains the counters $c.count_1,c.count_2,c.count_3\in\mathbb{N}$, each initialized to $1$ at the beginning of its corresponding sub-phase. At the end of every round of sub-phase $j$, the counter is updated as $c.count_j\leftarrow c.count_j+1$. Thus, $c.count_j$ always equals the current round number within sub-phase $j$ and is reset to $1$ at the beginning of the next phase.

Each \texttt{helper} $c_h$ maintains $c_h.advance\in\{0,1,2\}$, initially $0$. It is set to $1$ when the \texttt{helper} enters the unsafe zone during the advancing sub-phase, and to $2$ when designated as the messenger by the \texttt{guard} with $c.lead=1$, after which it carries the extension information to the other \texttt{guard} through the safe zone. Whenever $c.advance=1$, the parameter $c_h.hops\in\mathbb{N}$ stores the number of hops moved by $c_h$ into the unsafe zone.

At the beginning of each phase $i$, every cop computes the phase parameters $x_i$, $b_i$, and $d_i$. The parameter $x_i$ denotes the number of cops advancing from each boundary vertex during the advancing sub-phase, $b_i$ is a buffer used to correct rounding when $x_{i-1}$ is odd, and $d_i$ denotes the length of the safe zone at the end of phase $i$. These parameters are initialized as $x_0=\hat{\rho}$, $b_0=1$, and $d_0=\bar{\rho}$, and are updated as:
\begin{equation}\label{eq:x_i d_i}
(x_i,\, b_i) =
\begin{cases}
\left(\dfrac{x_{i-1}}{2},\; b_{i-1}\right) & \text{if } x_{i-1} \text{ is even,}\\
\left(\dfrac{x_{i-1}+1}{2},\; 0\right) & \text{if } x_{i-1} \text{ is odd and } b_{i-1} = 1,\\
\left(\dfrac{x_{i-1}-1}{2},\; 1\right) & \text{if } x_{i-1} \text{ is odd and } b_{i-1} = 0,
\end{cases}
\qquad d_i = d_{i-1} + x_i.
\end{equation}

Each \texttt{guard} retains the parameters $c.lead$, $c.uport$, and $c.safe_p$ from Step~1, since they correctly identify the boundary and the unsafe direction. The parameters $c.checkID_p$, $c.dist_p$, and $c.time_p$ are reset to $\perp$, $0$, and $0$, respectively, at the beginning of Step~2, as the checking process starts afresh. 
In addition, the \texttt{guard} with $c.lead=1$ maintains the parameter $c.extend$, which is set during the checking sub-phase to the number of hops by which the safe zone is extended on its side.
Each \texttt{checker} maintains the parameter $c.dist_p\in\mathbb{N}$, storing the distance currently being probed through port $p$.
Step~2 uses the procedures $\textsc{Assign}(p)$ and $\textsc{MoveTill}(p,\mathit{role})$ defined in Step~1. In addition, we define the following procedures.
\begin{itemize}
    \item $\textsc{Advance}(p,i)$: A cop $c$ at vertex $v$ executing this procedure exits through port $p$, moves one hop per round for exactly $i$ rounds, and then remains at the destination vertex.

    \item $\textsc{Advance\_Check}(p,i)$: A cop $c$ at vertex $v$ executing this procedure traverses through port $p$, advancing one hop per round until reaching distance $i$, and then returns to $v$ along the same path. The parameter $c.meet_p$, initially $0$, is set to $1$ if another cop is present at the destination vertex.

    \item $\textsc{Balance}()$: Let $\alpha$ denote the number of \texttt{helpers} at the current vertex. A cop $c$ executing this procedure moves toward the other \texttt{guard} through the safe zone if it is among the $ \lfloor\alpha/2\rfloor $ \texttt{helpers} with the largest identifiers; otherwise, it remains at the current vertex.
\end{itemize}

\smallskip
\noindent\textbf{Algorithm for Sub-phase~1: Advancing} (see Algorithm~\ref{alg:step2a}).
This sub-phase proceeds identically at both boundary vertices. In the first round, the minimum-ID \texttt{helper} departs through the unsafe port. In each subsequent round, the minimum-ID \texttt{helper} currently present at the boundary vertex departs through the same port, while every previously departed \texttt{helper} advances one hop further in the unsafe direction. Thus, over $x_i$ rounds, exactly one \texttt{helper} leaves the boundary vertex in each round, and the movement proceeds in a pipelined manner.
As a special case, both \texttt{guards} may be located at the same vertex, which occurs only in Phase~1 for a rooted initial configuration. In each round, the minimum-ID \texttt{helper} departs through the unsafe port of the \texttt{guard} with $c.lead=1$, while the second-minimum-ID \texttt{helper} departs through the unsafe port of the other \texttt{guard}. At the end of round $x_i+1$, all cops set $c.subphase\leftarrow2$ and begin the next sub-phase in the following round.

\begin{algorithm}[!htb]
\setlength{\algomargin}{0em}
\DontPrintSemicolon
\LinesNumbered
\caption{\texttt{Capture-LR}: Step~2, Phase~$i$ - Sub-phase~1 (Advancing)}
\label{alg:step2a}
Let $c$ be a cop at vertex $v$ with $c.step=2$ and $c.subphase=1$.\;
\uIf{$c.role=\texttt{guard}$}{
    \If{$c.count_1=x_i+1$}{
      $c.subphase\leftarrow 2$
    }
}
\uElseIf{$c.role=\texttt{helper}$}{
    \uIf{$c.count_1=x_i+1$}{
      $c.subphase\leftarrow 2$\;
    }
    \Else{
        \uIf{only one \texttt{guard} $c'$ is present at $v$}{
        \If{$c$ is the minimum-ID helper at $v$}{
            $c.advance\leftarrow 1$\;
            $c.hops\leftarrow x_i - c.count_1+1$\;
            execute $\textsc{Advance}(c'.uport,c.hops)$
        }
    }
        \ElseIf{two \texttt{guards} $c_1$ and $c_2$ are present at $v$}{
        let $c_1$ be the guard with $c_1.lead=1$\;
            \If{$c$ is the minimum-ID helper at $v$}{
                $c.advance\leftarrow 1$\;
                $c.hops\leftarrow x_i - c.count_1+1$\;
                execute $\textsc{Advance}(c_1.uport,c.hops)$\;
            }
            \ElseIf{$c$ is the second minimum-ID helper at $v$}{
                $c.advance\leftarrow 1$\;
                $c.hops\leftarrow x_i - c.count_1+1$\;
                execute $\textsc{Advance}(c_2.uport,c.hops)$\;
            }
        }
    }
}
\end{algorithm}
\begin{algorithm}
\DontPrintSemicolon
\LinesNumbered
\caption{\texttt{Capture-LR}: Step~2, Phase~$i$ - Sub-phase~2 (Checking)}
\label{alg:step2b}
Let $c$ be a cop at vertex $v$ with $c.step=2$ and $c.subphase=2$, and let $T_i=2(1+2+\cdots+x_i)+2x_i+d_i+1$.\;
\BlankLine
\uIf{$c.role=\texttt{guard}$}{
    let $p=c.uport$ and $q=1-p$\;
     \If{$c.count_2=T_i+1$}{
        $c.checkID_p\leftarrow\perp$, \: $c.subphase\leftarrow 3$\;
    }
    \Else{
        \uIf{$c.lead=1$}{
        \uIf{$c.checkID_p=\perp$}{
            $\textsc{Assign}(p)$\;
        }
        \Else{
            \uIf{ a \texttt{checker} $c'$ is present at $v$ with $c'.portin=p$ and $c'.meet_p=1$}{
                \uIf{$c.dist_p<x_i$}{
                    $c.dist_p\leftarrow c.dist_p+1$, \: $c.time_p\leftarrow 0$\;
                }
                \Else{
                    $c.extend\leftarrow x_i$\;
                    execute $\textsc{Advance}(p,x_i)$\;
                    $c.uport\leftarrow 1 -c.portin$\;
                    remain idle until $c.count_2=T_i$\;
                }
            }
            \uElseIf{a \texttt{checker} $c'$ is present at $v$ with $c'.portin=p$ and $c'.meet_p=0$}{
                $c.extend\leftarrow c.dist_p-1$\;
                execute $\textsc{Advance}(p,c.extend)$\;
                $c.uport\leftarrow 1-c.portin$\;
                remain idle until $c.count_2=T_i$\;
}
            \Else{
                $c.time_p\leftarrow c.time_p+1$\;
                \If{$c.time_p=2\cdot c.dist_p$}{
                    $c.extend\leftarrow c.dist_p-1$\;
                    execute $\textsc{Advance}(p,c.extend)$\;
                    $c.uport\leftarrow 1 -c.portin$\;
                    remain idle until $c.count_2=T_i$\;
                }
            }
        }
    }
        \Else{
        \uIf{$c.count_2 \le T_i-x_i$ and a helper $c_h$ with $c_h.advance=2$ is present at $v$}{
            \uIf{$c_h.hops=x_i$}{remain idle until $c.count_2=T_i$\;}
            \Else{
                execute $\textsc{Advance}(c.uport,x_i-c_h.hops)$\;
                $c.uport\leftarrow 1 -c.portin$\;
                remain idle until $c.count_2=T_i$\;
            }
        }
        \ElseIf{$c.count_2=T_i-x_i$ and no helper with $advance=2$ is present at $v$}{
            execute $\textsc{Advance}(c.uport,x_i)$\;
            $c.uport\leftarrow 1 -c.portin$\;
        }
    }
    }
}
\BlankLine
\uElseIf{$c.role=\texttt{helper}$}{
    \If{$c.count_2=T_i+1$}{
        $c.subphase\leftarrow 3$\;
    }
    \Else{
        \uIf{$c.advance=0$}{
        \If{a guard $c_g$ with $c_g.lead=1$ is present at $v$ and $c_g.checkID_p=\perp$ , where $p=c_g.uport$} {
        \uIf{$c$ is minimum-ID \texttt{helper} present at $v$}{$c.role\leftarrow\texttt{checker}$,\: $c.dist_p\leftarrow 1$\;
        execute $\textsc{Advance\_Check}(p,1)$\;}
        }
    }
        \uElseIf{$c.advance=1$}{
        \If{a guard $c_g$ with $c_g.lead=1$ is present at $v$ and $c.hops=c_g.extend$}{
            $c.advance\leftarrow 2$\;
            execute $\textsc{MoveTill}(c.portin,\texttt{guard})$\;
        }
        
    }
    } 
}
\BlankLine
\ElseIf{$c.role=\texttt{checker}$}{
    \If{$c.count_2=T_i+1$}{$c.subphase\leftarrow 3$\;}
    \Else{
        \If{guard $c_g$ is present at $v$ and $c_g.checkID_p=c.\mathrm{ID}$ for some port $p$}{
        \uIf{$c.dist_p\leq x_i$}{
        $c.dist_p \leftarrow c.dist_p+1$\;
        execute $\textsc{Advance\_Check}(p,c.dist_p)$\;}
        \Else{
            $c.role\leftarrow\texttt{helper}$,\: $c.advance\leftarrow 0$\;
            remain idle until $c.count_2=T_i$\;
        }
    }
    }
}
\end{algorithm}

\noindent\textbf{Algorithm for Sub-phase~2: Checking} (see Algorithm~\ref{alg:step2b}).
This sub-phase determines the extension of the safe zone on both sides.
At the vertex of the \texttt{guard} $c_g$ with $c_g.lead=1$, the minimum-ID \texttt{helper} is assigned as the \texttt{checker} for the unsafe port $p=c_g.uport$ in the first round. The \texttt{checker} initially probes one hop through the unsafe port and returns, reporting whether the \texttt{helper} stationed there is still present. As long as the reported vertex remains occupied, the \texttt{checker} probes one hop farther than before. This continues until the \texttt{checker} either finds an empty vertex, fails to return in time, or safely probes the full distance $x_i$. In the first case, $c_g$ determines the extension of the safe zone on its side, stores it in $c_g.extend$, moves to the new boundary vertex, updates its unsafe port, and instructs the \texttt{helper} stationed there to carry $c_g.extend$ through the safe zone to the other \texttt{guard}.

The \texttt{guard} $c'_g$ with $c'_g.lead=0$ waits for this \texttt{helper}. Upon its arrival, $c'_g$ learns $c_g.extend$ and advances $x_i-c_g.extend$ hops through its unsafe port. Since the maximum time for the \texttt{helper} to arrive is $T_i-x_i$, if it does not arrive within this time, $c'_g$ concludes that the safe zone was not extended on the other side and therefore advances the full $x_i$ hops on its own side. Thus, the checking sub-phase completes within $T_i$ rounds, and in round $T_i+1$, every cop sets $c.subphase\leftarrow3$.

\begin{algorithm}[!htb]
\DontPrintSemicolon
\LinesNumbered
\caption{\texttt{Capture-LR}: Step~2, Phase~$i$ - Sub-phase~3 (Balancing)}
\label{alg:step2c}
Let $c$ be a cop at vertex $v$ with $c.step=2$ and $c.subphase=3$.\;
\uIf{$c.role=\texttt{guard}$}{
    \If{$c.count_3=3d_i+1$}{
        $c.phase\leftarrow i+1$\;
        $c.subphase\leftarrow 1$\;
    }
}
\uElseIf{$c.role=\texttt{helper}$}{
    \If{$c.count_3=3d_i+1$}{
        $c.phase\leftarrow i+1$\;
        $c.subphase\leftarrow 1$\;
    }
    \Else{
     \uIf{$c$ is at the vertex of \texttt{guard} $c_g$ with $c_g.lead=1$}{
            remain idle until $c.count_3=2d_i$\;
            execute $\textsc{Balance}()$\;
        }
        \Else{
                execute $\textsc{MoveTill}(c.portin,\texttt{guard})$\;
                \uIf{guard $c_g$ is present at $v$ with $c_g.lead=0$}{
                execute $\textsc{MoveTill}(1-c_g.uport,\texttt{guard})$\;
                }
        }
    }
}
\end{algorithm}

\vspace{-0.3cm}
\noindent\textbf{Algorithm for Sub-phase~3: Balancing} (see Algorithm~\ref{alg:step2c}).
This sub-phase gathers all \texttt{helpers} at the vertex of the \texttt{guard} $c_g$ with $c_g.lead=1$, and then redistributes them between the two new boundary vertices.
Every \texttt{helper} not already at the vertex of $c_g$ moves through the port by which it last entered its current vertex and continues until reaching a \texttt{guard}. If the encountered \texttt{guard} is $c_g$, the \texttt{helper} waits until $c_g$ sets $c_g.balance=1$. Otherwise, it continues through the safe port of the other \texttt{guard} until reaching $c_g$. Within $2d_i$ rounds, all \texttt{helpers} gather at the vertex of $c_g$, after which $c_g$ sets $c_g.balance=1$. In the next round, every \texttt{helper} orders itself by identifier. The \texttt{helpers} with the larger identifiers move through the safe zone toward the other \texttt{guard}, while the remaining \texttt{helpers} wait for $d_i$ rounds, allowing the moving \texttt{helpers} to complete their journey. In the following round, every cop sets $c.phase\leftarrow i+1$ and $c.subphase\leftarrow1$.

\section{Correctness and complexity analysis}
In this section we provide the correctness of \textsc{Capture-LR} along with complexity analysis.
\begin{lemma}\label{lem:elim}
At most two cops are eliminated during Step~1 of the algorithm \texttt{Capture-LR}.
\end{lemma}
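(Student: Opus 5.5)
We prove the statement by charging each elimination to a distinct boundary vertex of the safe zone. The first fact we use is that, in Step~1, a cop can only be destroyed when it moves onto the vertex currently occupied by the robber. That vertex always lies in the unsafe zone, strictly between the two boundary vertices $b_1^t$ and $b_2^t$. Settlers, guards and waiting helpers never leave their vertices except along routes through the safe zone (\textsc{Divide}, \textsc{MoveTill} toward a settler or guard), and these cannot reach the robber. So only a \texttt{checker} executing \textsc{Check} or \textsc{Extended\_Check} can be eliminated.

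\textbf{Step 1: only a checker leaving a boundary vertex into the unsafe zone can die.} A checker probing from a non-boundary occupied vertex $v$ walks through a segment whose endpoints are both occupied. Since the robber cannot pass through occupied vertices, that segment cannot contain the robber, and such a checker is never killed. Hence every elimination is caused by a checker that starts at $b_1$ or $b_2$, exits through the port facing the unsafe zone, and walks into it. We will make this precise by noting that the robber's position is confined to the unsafe zone for all rounds, because occupied vertices persist: settlers and guards never move in Step~1 while they still hold an unsafe-facing port.

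\textbf{Step 2: at most one such checker is lost per boundary side.} Fix the boundary vertex $b_1$ and its cop $c$, which is a \texttt{settler} that later becomes a \texttt{guard}. We claim the following.
\begin{enumerate}
\item[(i)] At most one checker at a time is assigned to the unsafe port $p$ of $c$. This follows from $\textsc{Assign}$ being called only when $c.checkID_p=\perp$.
\item[(ii)] Once that checker fails to return within $2\cdot c.dist_p$ rounds, $c$ sets $c.safe_p\leftarrow 0$ and never assigns a checker to port $p$ again. As a \texttt{guard}, it only probes the other port $1-p$, which leads into the safe zone.
\end{enumerate}
The timing argument behind (ii) is that a living checker on \textsc{Check}$(p,i)$ returns within exactly $2i$ rounds, or sooner after meeting a cop. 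Therefore a missing checker is certainly dead, and no additional checker is dispatched into port $p$ before the verdict. A guard's \textsc{Extended\_Check} along the other port heads toward the other guard through the safe zone, so it cannot die. Combining (i) and (ii), each boundary vertex accounts for at most one elimination. Since there are at most two boundary vertices, at most two cops are eliminated. In the rooted case there is a single vertex with two unsafe ports, and the same argument applies port by port.

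\textbf{Main obstacle.} The delicate point is Step~1's claim that the boundary vertices are fixed during Step~1 and that the robber never lies between two occupied vertices that are not boundaries. Checkers and helpers moving between settlers could vacate vertices, but every initially occupied vertex retains its settler or guard throughout. Settlers whose ports are both safe become helpers only when their vertex lies inside the safe zone, since both neighbouring segments were verified to contain a cop. We would prove this invariant by induction on rounds. With it in hand, the timeout reasoning must also check that a checker does not return late rather than dead; that follows from the synchronous one-hop-per-round movement.
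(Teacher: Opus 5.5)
Your proposal is correct and follows essentially the same route as the paper: only \texttt{checkers} can be destroyed, and each unsafe port at a boundary vertex costs at most one of them, because the \texttt{settler}/\texttt{guard} keeps $checkID_p\neq\perp$ until the $2\cdot dist_p$ timeout, then sets $safe_p\leftarrow 0$ and never probes that port again, which gives at most two losses. Your explicit invariant that the boundary vertices stay occupied is a useful addition the paper leaves implicit; however, your sentence saying that every initially occupied vertex keeps its \texttt{settler} or \texttt{guard} is too strong, since non-boundary settlers do become \texttt{helpers} and leave via \textsc{Divide}, and only the two boundary vertices need to remain occupied.
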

\begin{proof}
We first show that only \texttt{checkers} can be eliminated during Step~1. A \texttt{settler} never leaves its current vertex (see Algorithm~\ref{alg:step1_settler}). A \texttt{guard} also remains at its boundary vertex throughout Step~1 and either waits or communicates only with cops arriving at its current vertex (see Algorithm~\ref{alg:step1_guard}). A \texttt{helper} moves only through ports that have already been verified safe, either while executing $\textsc{Divide}()$ (see Algorithm~\ref{alg:step1_helper}, Lines~7--10) or $\textsc{MoveTill}()$ (see Algorithm~\ref{alg:step1_helper}, Lines~22--27 and~37--45). Hence, none of these roles can encounter the robber. The only role that probes unexplored directions is the \texttt{checker} (see Algorithm~\ref{alg:step1_checker}), and therefore only a \texttt{checker} can be eliminated.

A \texttt{checker} can be eliminated only while probing through a port whose safety has not yet been determined, and that leads toward the robber. If the assigned \texttt{checker} does not return within the expected time, the corresponding \texttt{settler} or \texttt{guard} marks that port as unsafe. Moreover, if the cop that assigned the \texttt{checker} is a \texttt{settler}, it becomes a \texttt{guard}, if it has not already done so (see Algorithm~\ref{alg:step1_settler}, Lines~17--24, and Algorithm~\ref{alg:step1_guard}, Lines~36--40). Once a port is marked unsafe, no further \texttt{checker} is assigned to probe through that port (see Algorithm~\ref{alg:step1_guard}). Therefore, each unsafe port can cause at most one \texttt{checker} to be eliminated. Since the ring has exactly two boundary vertices, there are at most two such unsafe ports leading into the unsafe zone. Hence, at most two cops are eliminated during Step~1.
\end{proof}

\begin{lemma}\label{lem:ns_condition}
For Step~1 of \textsc{Capture-LR} to terminate successfully, the initial configuration must contain either a vertex with at least three cops or at least three distinct vertices each containing at least two cops.
\end{lemma}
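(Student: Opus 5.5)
We prove this as a necessity statement by contraposition. Assume the initial configuration has no vertex with three or more cops and at most two vertices with exactly two cops. We then show that the algorithm's own rules leave Step~1 unable to finish. Step~1 finishes only when both boundary vertices are occupied by \texttt{guards} and all surviving \texttt{helpers} have gathered. A \texttt{guard} arises only from a \texttt{settler} that has had a \texttt{checker} fail to return through an unsafe port (Algorithm~\ref{alg:step1_settler}). That \texttt{checker} must have been a \texttt{helper} at the \texttt{settler}'s vertex, and it is lost, as in Lemma~\ref{lem:elim}. So every successful run must, at the two boundary vertices, both produce \texttt{checkers} and then carry out further probing through the remaining port $p$ (Algorithm~\ref{alg:step1_guard}).

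First we note that no role creation happens without helpers. A vertex holding a single cop has only a \texttt{settler}, which never moves and never probes. Its ports stay at $\perp$ until some other cop's \texttt{checker} reaches it and marks a port safe, or until a \texttt{helper} arrives via $\textsc{MoveTill}(p,\texttt{settler})$. Helpers leave a vertex only after both its ports are verified safe ($\textsc{Divide}$), or on a guard's instruction. Counting helpers therefore gives a resource argument: probing from a vertex consumes one helper per unknown port. A \texttt{settler} with two unknown ports needs two helpers at once (Algorithm~\ref{alg:step1_helper}, Lines~14--18), which requires three cops at that vertex. A vertex with exactly two cops can probe only one port at a time.

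Next we split into cases according to how many vertices hold two cops (zero, one, or two). The robber's strategy is to sit adjacent to the occupied vertex whose probes would reveal the boundary. In each case we follow the run and show that the helpers all die or get stuck. With no vertex of two cops, no \texttt{checker} ever exists and Step~1 never progresses. With one or two such vertices, a helper probing toward the robber dies and turns its \texttt{settler} into a \texttt{guard} that has no helper left. That guard cannot assign a \texttt{checker} to port $p$, so it never learns $\bar\rho$ or $N$. The helpers that survive are too few to deliver, via \texttt{Extended\_Check}, the encounter of the other \texttt{guard} that Step~1 needs in order to set $balance=1$. The adversary chooses the robber's position so that both two-cop vertices probe toward it first, or so that at most one guard ever forms.

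The main obstacle is the case of exactly two vertices with two cops each. There we must show that the robber can always arrange for both helpers to be wasted. One way is to make one vertex's helper die while the other's helper goes on to serve a lone \texttt{settler}, which then consumes it as a \texttt{checker} on an unsafe port. Another is to leave one boundary without a guard. I would first try placing the robber adjacent to one two-cop vertex on the side facing away from the other, and then trace the algorithm's deterministic port order (port $0$ first) to confirm that no third helper ever becomes available.
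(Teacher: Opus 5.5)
\textbf{There is a genuine gap.} You set out to prove the universal contrapositive: every configuration violating the condition stalls. Your argument rests on counting helpers and on the robber sitting next to the two-cop vertices, and neither ingredient holds.

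First, helpers are not a conserved resource. A \texttt{settler} whose two ports are both marked safe becomes a \texttt{helper} itself (first branch of Algorithm~\ref{alg:step1_settler}), and the cops at that vertex then execute \textsc{Divide}(). Take a two-cop vertex whose neighbours are both occupied:
\begin{itemize}
\item its single checker confirms both ports at distance~$1$, and the vertex releases \emph{two} free cops;
\item each neighbouring lone \texttt{settler} already had its back port marked safe when that checker entered it, so with its new helper it probes only the other port;
\item if that next vertex is occupied, this \texttt{settler} converts too.
\end{itemize}
This is a cascade, not a consumption. Your claim that the lone \texttt{settler} consumes the helper as a checker on an unsafe port is only true at the boundary.

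Second, the robber may only start on a vacant vertex, so it cannot sit next to an interior two-cop vertex. For example, if the cops occupy one contiguous arc, there is a single gap and the boundary is forced. In such configurations the cascade can carry helpers to both boundaries, so the universal statement your case analysis targets need not hold at all. In any event, the hardest case (two doubly occupied vertices) is left at ``I would first try''.

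\textbf{The missing idea.} The lemma is a worst-case necessity claim, and the adversary chooses the cop placement and the port labels as well as the robber's start. So a single bad configuration suffices, and this is exactly the paper's proof:
\begin{itemize}
\item Make the two doubly occupied vertices exactly the two boundary vertices, with port~$0$ leading into the unsafe zone at both.
\item Each lone \texttt{helper} is assigned to port~$0$ first and walks toward the robber. The robber, knowing the algorithm and having unbounded speed, eliminates both.
\item Both \texttt{settlers} time out, mark port~$0$ unsafe, and become \texttt{guards} with no \texttt{helper} left to probe the other port or carry information.
\item Every other occupied vertex holds a lone \texttt{settler}, which never moves or probes.
\end{itemize}
Hence no cop ever moves again and Step~1 never terminates. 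Your remark that the adversary can make ``both two-cop vertices probe toward it first'' is precisely this construction. You should commit to it as the whole proof rather than treat it as one branch of an exhaustive case analysis that cannot be completed.
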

\begin{proof}
Suppose the initial configuration does not satisfy the stated condition. Then every occupied vertex contains at most two cops, and at most two vertices contain exactly two cops. Consider an initial configuration in which these two vertices are precisely the boundary vertices, and at both vertices, port~$0$ leads toward the unsafe zone. Each such vertex initially contains one \texttt{settler} and one \texttt{helper}. The \texttt{settler} assigns the \texttt{helper} as a \texttt{checker} to probe port~$0$ (see Algorithm~\ref{alg:step1_settler}, Lines~25--27).

Since both \texttt{checkers} move toward the robber, the robber can eliminate both \texttt{checkers}. Consequently, both \texttt{settlers} mark port~$0$ as unsafe and become \texttt{guards} (see Algorithm~\ref{alg:step1_settler}, Lines~17--24). At this point, no \texttt{helper} remains at either boundary vertex to continue probing the remaining port or to propagate information through the safe zone. Every other occupied vertex contains only a single \texttt{settler}, which cannot assign a \texttt{checker}. Hence, no further progress is possible, and Step~1 cannot terminate successfully.

Therefore, for Step~1 of \textsc{Capture-LR} to terminate successfully, the initial configuration must contain either a vertex with at least three cops or at least three distinct vertices each containing at least two cops.
\end{proof}

\begin{lemma}\label{lem:L_condition_step1}
The condition of Lemma~\ref{lem:ns_condition} is always satisfied whenever at least $n-\rho+4$ cops are initially available.
\end{lemma}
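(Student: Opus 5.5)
We argue by contradiction and count. Suppose at least $n-\rho+4$ cops are initially available, yet the condition of Lemma~\ref{lem:ns_condition} fails. Then every occupied vertex holds at most two cops, and at most two occupied vertices hold exactly two cops. All cops initially lie in the safe zone $Z_0$, since the unsafe zone is determined by the robber's position relative to the cops. So the number of occupied vertices is at most $|Z_0|=\bar{\rho}+1$.

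The counting step is as follows. By the definition in the preliminaries, the length of the safe zone is $\bar{\rho}=|Z_0|-1$ and $\rho=n-\bar{\rho}$. Hence $|Z_0|=n-\rho+1$. If at most two vertices carry two cops and the rest carry at most one, the total number of cops is at most
\[
|Z_0|+2 = n-\rho+3 .
\]
This is strictly less than $n-\rho+4$, a contradiction. Therefore either some vertex holds at least three cops, or at least three distinct vertices each hold at least two cops.

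The only delicate point is the bookkeeping between $\rho$, $\bar{\rho}$ and $|Z_0|$. I would verify explicitly that $|Z_0|=n-\rho+1$ under the paper's conventions, including the boundary vertices in $Z_0$. I would also check the rooted case, where there is one boundary vertex and $|Z_0|=1$, so that $\rho=n$. There the bound says that at least $4$ cops sit on a single vertex, which satisfies the first alternative directly. If the paper's indexing shifts $|Z_0|$ by one, the slack of one cop in the bound still absorbs it, provided $|Z_0|\le n-\rho+1$. So the plan is to confirm that inequality from the definitions and then apply the pigeonhole count above.
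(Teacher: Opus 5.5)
Your proof is correct and is essentially the paper's argument: all cops lie in the safe zone, which has exactly $n-\rho+1$ vertices, so if the condition of Lemma~\ref{lem:ns_condition} failed there would be at most $(n-\rho+1)+2=n-\rho+3$ cops, a contradiction. Your closing remark about an indexing shift is unnecessary, and the bound has no slack to absorb an upward shift anyway; this does not matter, because $|Z_0|=\bar{\rho}+1=n-\rho+1$ follows exactly from the definitions.
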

\begin{proof}
Recall that $\rho$ denotes the length of the unsafe zone in the initial configuration. Hence, the cops initially occupy at most $n-(\rho-1)=n-\rho+1$ vertices.

Suppose, for contradiction, that the initial configuration does not satisfy the condition of Lemma~\ref{lem:ns_condition}. Then at most two occupied vertices contain two cops each, and every other occupied vertex contains exactly one cop. Since the cops occupy at most $n-\rho+1$ vertices, the total number of cops is at most $2\cdot2+(n-\rho-1)\cdot1=n-\rho+3$, contradicting the assumption that at least $n-\rho+4$ cops are initially available. Therefore, the initial configuration satisfies the condition of Lemma~\ref{lem:ns_condition}.
\end{proof}

\begin{lemma}\label{lem:same_parameters}
At the completion of Step~1, all alive cops possess the same values of the parameters $\bar{\rho}$ and $N$, and simultaneously begin the execution of Step~2.
\end{lemma}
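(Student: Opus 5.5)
We argue by following the information flow at the end of Step~1. The plan has three parts: show Step~1 ends in a well-defined meeting event between the two \texttt{guards}, show $\bar\rho$ is computed consistently, show $N$ is disseminated consistently, and then check the timing that makes everyone enter Step~2 in the same round. The rooted case, where both ports of a single vertex are marked unsafe, we treat separately.

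\textbf{Rooted case.} Here the \texttt{settler} becomes a \texttt{guard} with $safe_0=safe_1=0$ (Algorithm~\ref{alg:step1_guard}, Line~2). In the same round, every \texttt{helper} at $v$ executes Lines~2--6 of Algorithm~\ref{alg:step1_helper}. All of them therefore read the same vertex count, giving identical $N$, set $\bar\rho=0$, and set $step=2$ simultaneously. This case is immediate.

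\textbf{Scattered case: the meeting event and $\bar\rho$.} By Lemma~\ref{lem:ns_condition} and Lemma~\ref{lem:L_condition_step1}, Step~1 makes progress until exactly two \texttt{guards} $g_1,g_2$ exist at the two boundary vertices; Lemma~\ref{lem:elim} guarantees enough helpers survive. We then show that eventually a \texttt{checker} running $\textsc{Extended\_Check}$ from one guard reaches the other guard's vertex. That checker records $enctype=3$ and the distance $dist_p$, which equals the safe-zone length, since probing through the safe zone from $g_1$ succeeds until it hits $g_2$.

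Two things happen at that meeting. The guard receiving a foreign \texttt{extcheck} checker copies $c'.dist_p$. The guard whose own checker returns sets $\bar\rho\leftarrow c.dist_p$. Both quantities equal the path length between the guards, so both guards agree on $\bar\rho$ and both agree on who has $lead=1$, namely the smaller identifier. We must also handle both checkers meeting simultaneously or crossing each other. Since edge crossings are undetectable, we argue that in every interleaving each guard obtains the same distance value and the identifier comparison is symmetric, so the assignment of $lead$ is consistent.

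\textbf{Scattered case: gathering and $N$.} Next, all remaining \texttt{helpers} migrate to the leader, via the $\textsc{MoveTill}(\cdot,\texttt{guard})$ rules of Algorithm~\ref{alg:step1_helper}, Lines~22--27 and~37--45. The leader waits $2\bar\rho$ rounds, which suffices because any helper is within distance $\bar\rho$ along the safe zone and may need to first reach the lead-$0$ guard. It then counts cops, sets $N$ to that count plus one (for the other guard), and sets $balance=1$. In the next round, every helper at the leader reads $N$ and $\bar\rho$ from it.

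\textbf{Timing.} The helpers split as follows:
\begin{itemize}
\item those staying wait $\bar\rho$ rounds;
\item those moving travel exactly $\bar\rho$ hops to the lead-$0$ guard;
\item the leader itself executes $\textsc{Wait}(\bar\rho)$.
\end{itemize}
The lead-$0$ guard sets $N$ upon the arrival of the first moving helper, which is in the same round as everyone else's transition because all movers leave together and travel the same distance. Thus all cops set $step=2$ in the same round with identical $(\bar\rho,N)$.

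The main obstacle is the second part: proving rigorously that, despite asynchronous arrival of the lead-$0$ helpers and possibly concurrent extended checks from both guards, every helper reaches the leader within $2\bar\rho$ rounds, and no helper is still in transit when the count is taken. We would establish this by bounding each helper's route length by $2\bar\rho$ from the meeting round. One subtlety remains: a helper at the lead-$0$ guard may only learn $lead=0$ slightly later. We would absorb that delay by noting that the lead-$0$ guard learns its status no later than the leader does.
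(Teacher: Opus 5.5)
Your outline is the paper's own argument, step for step: the guards discover each other through an \textsc{Extended\_Check} checker, agree on $\bar\rho$ and on $lead$, the leader runs \textsc{Wait}$(2\bar\rho)$ and counts, and the cops then split into stayers waiting $\bar\rho$ rounds and movers walking $\bar\rho$ hops that deliver $N$. Your separate rooted case is a genuine addition, since the paper's proof only covers \texttt{guards} that find each other through a checker. (Lemma~\ref{lem:ns_condition} is only a necessary condition, so it cannot show that Step~1 reaches the two-\texttt{guard} state. You do not need it, because the statement presupposes that Step~1 completes.)

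The gap is in the step you yourself single out as the main obstacle, which the paper merely asserts. The argument you sketch for it would fail.

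\emph{The false claim.} You say the lead-$0$ \texttt{guard} learns its status no later than the leader. This is false when the larger-identifier \texttt{guard}'s checker is the one that reaches the smaller-identifier \texttt{guard}. The leader then sets $lead=1$ and starts \textsc{Wait}$(2\bar\rho)$ in the round that foreign checker arrives (Algorithm~\ref{alg:step1_guard}, Lines~8--12). The other \texttt{guard} and its \texttt{helpers} learn only when that checker has walked the $\bar\rho$ hops back and reports $enctype=3$ (Lines~26--35 and the matching $enctype=3$ branch of Algorithm~\ref{alg:step1_helper}). So the delay is a full $\bar\rho$ and is not absorbed.

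\emph{The correct accounting} has to be in time, not route length. Let $t_0$ be the round in which the leader starts waiting. In both discovery modes the lead-$0$ side is informed by $t_0+\bar\rho$; in the other mode it was informed already at $t_0-\bar\rho$. By $t_0+\bar\rho$, every \texttt{helper} heading toward the lead-$0$ \texttt{guard} has arrived there. One more traversal of $\bar\rho$ hops then brings everyone to the leader by $t_0+2\bar\rho$. This is what the factor $2$ pays for, and it leaves no slack in this mode. A helper with a short route can still be held at the lead-$0$ \texttt{guard} for up to $\bar\rho$ rounds.

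\emph{A further uncovered case.} Even this accounting misses a \texttt{guard}'s own \textsc{Extended\_Check} checker that is still out when that \texttt{guard} fixes its $lead$ via the foreign-checker branch. Nothing recalls it: the \texttt{guard} branch of Algorithm~\ref{alg:step1_checker} keeps re-probing with incremented distance while $enctype\neq 3$. Its remaining itinerary can therefore be of order $\bar\rho^2$, far beyond the $2\bar\rho$ window. The paper's one-line assertion does not handle this cop either. Your plan (``every helper reaches the leader within $2\bar\rho$ rounds'') does not survive it, so a complete argument must treat this cop separately.
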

\begin{proof}
The two \texttt{guards} first discover each other in one of two ways. Either a \texttt{guard}'s assigned \texttt{checker} returns after encountering the other \texttt{guard} (\texttt{checker} $c'$ returned through port $p$ with parameter $c'.enctype_p=3$), or a \texttt{checker} assigned by the other \texttt{guard} arrives at its current vertex. In either case, both \texttt{guards} determine the length of the safe zone as the probing distance of the corresponding \texttt{checker} and store it locally as $\bar{\rho}$ (see Algorithm~\ref{alg:step1_guard}, Lines~26--27 and~8--10). Thus, both \texttt{guards} obtain the same value of $\bar{\rho}$.

In either case, the \texttt{guard} with the smaller identifier sets $lead=1$, and the \texttt{guard} with the larger identifier sets $lead=0$ (see Algorithm~\ref{alg:step1_guard}, Lines~8--14 and~26--33). The \texttt{guard} with $lead=1$ then executes $\textsc{Wait}(2\bar{\rho})$, allowing every alive cop, except the other \texttt{guard}, sufficient time to reach its vertex. It then computes the total number of alive cops as $N=(\text{cops at its current vertex})+1$ and sets its parameter  $balance=1$ (see Algorithm~\ref{alg:step1_guard}, Lines~11--12 and~30--31).

Every \texttt{helper} at that vertex observes the \texttt{guard} set the parameter $balance=1$ and  immediately learns the values of $\bar{\rho}$ and $N$ from the \texttt{guard} (see Algorithm~\ref{alg:step1_helper}, Lines~22--23); this happens for all such \texttt{helpers} before any of them move. Only afterward are the \texttt{helpers} divided into two groups by identifier: the $\lceil(N-2)/2\rceil$ \texttt{helpers} with the smaller identifiers remain at the current vertex and execute $\textsc{Wait}(\bar{\rho})$, while the remaining \texttt{helpers}, already carrying the values of $\bar{\rho}$ and $N$, move toward the other \texttt{guard} through the safe zone (see Algorithm~\ref{alg:step1_helper}, Lines~24--27). Since the length of the safe zone is $\bar{\rho}$, this journey takes exactly $\bar{\rho}$ rounds. Correspondingly, the \texttt{guard} with $lead=1$ itself executes $\textsc{Wait}(\bar{\rho})$ once $balance=1$ is set, before proceeding to Step~2 (see Algorithm~\ref{alg:step1_guard}, Lines~4--6). 
Hence, the \texttt{guard} with $lead=1$ and the \texttt{helpers} that remain at its vertex both become ready for Step~2 exactly $\bar{\rho}$ rounds after its parameter $balance$ is set to $1$ — precisely when the departing \texttt{helpers} complete their journey through the safe zone.

Meanwhile, the \texttt{guard} with $lead=0$ remains idle until some cop with a known value of $N$ arrives at its vertex (see Algorithm~\ref{alg:step1_guard}, Lines~15 and~34); this is precisely the moment when the departing \texttt{helpers} arrive, already carrying the correct value of $N$. Upon this arrival, the \texttt{guard} with $lead=0$ sets its own $N$ accordingly and proceeds to Step~2 (see Algorithm~\ref{alg:step1_guard}, Lines~16 and~35).

Therefore, every alive cop possesses the same values of $\bar{\rho}$ and $N$, and all alive cops begin the execution of Step~2 simultaneously.
\end{proof}

\begin{lemma}
\label{lem:rho-determine}
If at least $\rho + \lfloor \log \rho \rfloor + 2$ cops initiate Step~2 of \texttt{Capture-LR}, then every cop can uniquely determine a value $\hat{\rho}$ satisfying $\hat{\rho} \geq \rho$.
\end{lemma}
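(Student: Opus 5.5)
The plan is to exploit monotonicity of $f(x)=x+\lfloor\log x\rfloor+2$ together with the fact, from Lemma~\ref{lem:same_parameters}, that every alive cop enters Step~2 holding the same value $N$. By definition $N$ is the number of alive cops at the start of Step~2, so the hypothesis reads $N\ge \rho+\lfloor\log\rho\rfloor+2=f(\rho)$.

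\textbf{Well-definedness.} First I will check that $f$ is strictly increasing on $\mathbb{N}$. For $x<y$ we have $\lfloor\log x\rfloor\le\lfloor\log y\rfloor$ and $x<y$, so $f(x)<f(y)$. In particular $f(x)\ge x+2$, so every $x$ with $f(x)\le N$ satisfies $x\le N-2$. The set $\{x\in\mathbb{N}: f(x)\le N\}$ is therefore finite. It is also nonempty, since it contains $\rho$ because $f(\rho)\le N$. Hence its maximum $\hat{\rho}$ exists and is unique.

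\textbf{Common value.} Each cop evaluates the same deterministic expression on the same input $N$, since Lemma~\ref{lem:same_parameters} gives all cops identical $N$. So all cops obtain the identical value $\hat{\rho}$. Because they begin Step~2 simultaneously, no cop uses a stale value of $N$. The computation itself needs only local arithmetic: a cop can increment $x$ from $1$ until $f(x+1)>N$.

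\textbf{Bound.} Since $\rho$ belongs to the set, $\hat{\rho}\ge\rho$ follows immediately from the definition of a maximum.

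\textbf{Main point to justify.} The only subtle step is ensuring that the $N$ each cop holds truly counts cops alive at the start of Step~2. Cops killed in Step~1 must be excluded, and a cop must not learn a count before it is final. By Lemma~\ref{lem:elim}, deaths occur only to checkers probing unsafe ports. Once both guards have found each other, no further checker is sent, because the guard with $lead$ set assigns none. The leader's $\textsc{Wait}(2\bar\rho)$ allows all surviving cops to gather before counting, so $N$ is exactly the surviving count and the hypothesis applies to it. If this accounting is granted as part of Lemma~\ref{lem:same_parameters}, the rest is the elementary monotonicity argument above.
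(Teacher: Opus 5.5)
Your proposal is correct and follows the paper's proof essentially step for step: all cops hold the same $N$ by Lemma~\ref{lem:same_parameters}; $N\ge f(\rho)$ puts $\rho$ in the set $\{x : f(x)\le N\}$; this set has a unique maximum $\hat{\rho}$ that every cop computes identically; and $\hat{\rho}\ge\rho$ follows by maximality. Your bound $x\le N-2$ (from $f(x)\ge x+2$) justifies the finiteness that the paper only asserts, and, like the paper, you rely on Lemma~\ref{lem:same_parameters} to identify $N$ with the number of cops initiating Step~2.
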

\begin{proof}
By Lemma~\ref{lem:same_parameters}, every alive cop knows the same value of $N$ at the beginning of Step~2. Let $f(x)=x+\lfloor\log x\rfloor+2$, and consider the set $X=\{x\in\mathbb{N}:f(x)\le N\}$. Since at least $\rho+\lfloor\log\rho\rfloor+2$ cops initiate Step~2, we have $N\ge \rho+\lfloor\log\rho\rfloor+2=f(\rho)$. Hence, $\rho\in X$, implying that the set $X$ is non-empty. Furthermore, $f$ is a strictly increasing function. Also, for any fixed value of $N$, there are only finitely many integers $x$ satisfying $f(x)\le N$. Therefore, the set $X$ has a unique maximum element. Every cop computes $\hat{\rho}=\max X$. Since every cop knows the same value of $N$, they all construct the same set $X$ and consequently compute the same value of $\hat{\rho}$. Finally, because $\rho\in X$ and $\hat{\rho}=\max X$, this implies $\hat{\rho}\ge\rho$.
\end{proof}

\begin{lemma}\label{lem:advancing}
Throughout the execution of Step~2 of \texttt{Capture-LR}, the number of \texttt{helpers} that leave the boundary vertices and move toward the unsafe zone during the advancing sub-phases is at most $\hat{\rho}+1$.
\end{lemma}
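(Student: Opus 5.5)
The proof reduces to a bound on the phase parameters $x_i$ from \eqref{eq:x_i d_i}. In the advancing sub-phase of phase $i$, the \texttt{helpers} leave through the unsafe ports. By Algorithm~\ref{alg:step2a}, exactly one \texttt{helper} departs from each boundary vertex in each of the $x_i$ rounds. In the rooted special case, one departs through each \texttt{guard}'s unsafe port. So the total number leaving the boundary vertices in phase $i$ is exactly $2x_i$.

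Before the next advancing sub-phase starts, every surviving advanced \texttt{helper} is in the safe zone again. During checking it lies between the old and new boundary, or is the messenger. During balancing (Algorithm~\ref{alg:step2c}) it returns to a \texttt{guard}'s vertex. Hence the advancing sub-phases of different phases never overlap in the set of \texttt{helpers} that are out in the unsafe zone. I therefore read the statement as: during any advancing sub-phase of Step~2, the number of \texttt{helpers} that leave the boundary vertices toward the unsafe zone is at most $\hat\rho+1$. Equivalently, I must show $2x_i\le\hat\rho+1$ for every $i\in\{1,\dots,h\}$. (Summing $2x_i$ over all phases cannot be bounded by $\hat\rho+1$. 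Already for $\hat\rho=8$ one gets $8+4+2=14$. So the bound must be understood per advancing sub-phase, counting the \texttt{helpers} simultaneously out of the boundary vertices.)

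\textbf{Step 1: bound on $x_1$.} From $x_0=\hat\rho$ and $b_0=1$, the recurrence gives $x_1=\hat\rho/2$ if $\hat\rho$ is even and $x_1=(\hat\rho+1)/2$ if $\hat\rho$ is odd. The third case needs $b_0=0$ and so cannot occur. In both cases $x_1\le(\hat\rho+1)/2$, so $2x_1\le\hat\rho+1$.

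\textbf{Step 2: $x_i\le x_{i-1}$ for $i\ge2$.} I check the three cases of \eqref{eq:x_i d_i}. Each new value is $x_{i-1}/2$, $(x_{i-1}+1)/2$ or $(x_{i-1}-1)/2$, and each of these is at most $x_{i-1}$ whenever $x_{i-1}\ge1$. Positivity of $x_{i-1}$ for $i-1\le h$ follows because $x_j\ge\lfloor\hat\rho/2^j\rfloor\ge1$ for $j\le h=\lfloor\log\hat\rho\rfloor$; the halving with rounding never drops below the floor. Induction then gives $x_i\le x_1$ for every $i\in[1,h]$. Combined with Step~1, $2x_i\le\hat\rho+1$.

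\textbf{Step 3: the per-phase count is exactly $2x_i$.} Using Algorithm~\ref{alg:step2a}, I verify that a \texttt{helper} departs only while $c.count_1\le x_i$, one per boundary vertex per round. Checkers in Algorithm~\ref{alg:step2b} are not counted: they probe within the region already entered, and the lemma concerns the advancing sub-phases only. I also verify that enough \texttt{helpers} are present for this, though only the upper bound matters here. The main obstacle is this bookkeeping: rigorously arguing that no \texttt{helper} remains in the unsafe zone across phases, and that the rooted case does not double the count. For the first point I would rely on the timing $3d_i$ of the balancing sub-phase, which exceeds the $2d_i$ needed to gather all \texttt{helpers} at the leading \texttt{guard}. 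For the second I would note that the two \texttt{helpers} departing per round from the single vertex still amount to one per unsafe port.
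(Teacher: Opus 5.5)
You have proved a different, weaker statement than the lemma. Your bound $2x_i\le\hat\rho+1$ for each single advancing sub-phase is true, and your monotonicity argument for $x_i$ is fine. But the lemma bounds, cumulatively over all $h$ phases, the number of \emph{distinct} \texttt{helpers} that ever take part in an advancing sub-phase. That is how it is used in Lemma~\ref{lem:checker_helper}, where $(N-2)-(\hat\rho+1)$ \texttt{helpers} are declared never to advance and are kept in reserve as checkers. It is also how it enters the cop budget of Lemma~\ref{lem:checker-availability}. A per-phase bound gives no control here: eliminations accumulate from phase to phase, and nothing in your argument stops fresh \texttt{helpers} from being drawn in every phase. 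Your dismissal of the cumulative reading via $8+4+2=14$ counts departures with multiplicity. A \texttt{helper} that survives phase $i-1$ and returns during balancing is the same \texttt{helper} when it advances again in phase $i$.

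The missing idea is this reuse, which is exactly what the paper's proof tracks. In phase $i-1$ the robber can eliminate advancers from only one side at a time, since entering its vertex from both sides would mean it was already surrounded. So at most $x_{i-1}$ of the $2x_{i-1}$ advancers die, and at least $x_{i-1}$ survive and are back at the boundary vertices after balancing. By \eqref{eq:x_i d_i}, $2x_i$ equals $x_{i-1}$, $x_{i-1}+1$ or $x_{i-1}-1$, and the flag $b_i$ records whether one extra \texttt{helper} may be drawn. A ``$+1$'' phase, which needs one fresh \texttt{helper}, occurs only when $b_{i-1}=1$. After the first such phase, this happens only after a ``$-1$'' phase has left one survivor unused. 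Hence at most one \texttt{helper} beyond the initial $\hat\rho$ is ever newly drawn. In accounting form, a direct induction over the three cases gives $\sum_{j<i}x_j+2x_i+b_i=\hat\rho+1$ for every $i\ge1$. So the advancers eliminated in earlier phases plus those needed in phase $i$ never exceed $\hat\rho+1$, and this, not $2x_i\le\hat\rho+1$, is what the later lemmas rely on.
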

\begin{proof}
In the advancing sub-phase of every phase $i\ge1$, $x_i$ \texttt{helpers} from each boundary vertex set $\text{advance}=1$ and leave the boundary vertices, moving one by one toward the unsafe zone in a pipe-lined manner (see Algorithm~\ref{alg:step2a}, Lines~9--23). Thus, a total of $2x_i$ \texttt{helpers} leave the boundary vertices during the advancing sub-phase of phase $i$. We refer to these \texttt{helpers} as the \emph{advancers} of phase $i$, and to their movement as the \emph{advancing task} of phase $i$.

Since the cops and the robber move in alternating rounds, the robber remains stationary during each round in which the cops move. The \emph{advancing task} proceeds simultaneously from both boundary vertices. If the robber is not surrounded, then cops cannot occupy both neighboring vertices of the robber simultaneously. Since a cop can be eliminated only when it moves onto the robber's current vertex, cops advancing from at most one side can be eliminated in any round of the \emph{advancing task}. Hence, among the $2x_i$ \emph{advancers} of phase $i$, at most $x_i$ are eliminated, and at least $x_i$ remain alive at the end of the advancing sub-phase of phase $i$. We refer to these remaining \emph{advancers} as the \emph{surviving advancers} of phase $i$.

Recall the recursive definition of $x_i$ and $b_i$ from Equation~\eqref{eq:x_i d_i}. Initially, $x_0=\hat{\rho}$ and $b_0=1$. Thereafter, for every phase $i\ge1$,
$$
(x_i,b_i)=
\begin{cases}
\left(\dfrac{x_{i-1}}{2},\,b_{i-1}\right), & \text{if }x_{i-1}\text{ is even},\\
\left(\dfrac{x_{i-1}+1}{2},\,0\right), & \text{if }x_{i-1}\text{ is odd and }b_{i-1}=1,\\
\left(\dfrac{x_{i-1}-1}{2},\,1\right), & \text{if }x_{i-1}\text{ is odd and }b_{i-1}=0.
\end{cases}
$$

Let $0\le j_1<j_2<\cdots<j_m\le h$, for some $m\ge0$, denote all indices for which $x_{j_p}$ is odd, where $p\in\{1,2,\ldots,m\}$.

For every $1\le i\le j_1$, we have $x_i=x_{i-1}/2$ and $b_i=1$. Hence, $2x_1=\hat{\rho}$, $2x_2=x_1,\ldots,2x_{j_1}=x_{j_1-1}$. Therefore, for every $2\le i\le j_1$, the \emph{advancing task} of phase $i$ is performed entirely by the \emph{surviving advancers} of phase $i-1$. Consequently, only the initial $\hat{\rho}$ \emph{advancers} have ever left the boundary vertices during the advancing sub-phases up to and including phase $j_1$. Thus, at the end of phase $j_1$, at least $x_{j_1}$ \emph{surviving advancers} of phase $j_1$ remain alive.

For phase $i=(j_1+1)$, since $x_{j_1}$ is odd and $b_{j_1}=1$, we have $x_{j_1+1}=(x_{j_1}+1)/2$ and $b_{j_1+1}=0$. Hence, $2x_{j_1+1}=x_{j_1}+1$. Therefore, the \emph{advancing task} of phase $(j_1+1)$ is performed by the $x_{j_1}$ \emph{surviving advancers} of phase $j_1$ together with one additional \texttt{helper} that has not previously left a boundary vertex during any advancing sub-phase. Thus, at the end of phase $(j_1+1)$, at most $\hat{\rho}+1$ \texttt{helpers} have ever left the boundary vertices during the advancing sub-phases, among which at least $x_{j_1+1}$ remain as the \emph{surviving advancers} of phase $(j_1+1)$.

For every $j_1+2\le i\le j_2$, we have $x_{i-1}$ even and hence $2x_i=x_{i-1}$ and $b_i=b_{i-1}=0$. Therefore, for every such phase, the \emph{surviving advancers} of phase $(i-1)$ exactly suffice to perform the \emph{advancing task} of phase $i$. Hence, at the end of phase $j_2$, at least $x_{j_2}$ \emph{surviving advancers} of phase $j_2$ remain alive among the previously accounted $\hat{\rho}+1$ \texttt{helpers}.

For phase $i=(j_2+1)$, since $x_{j_2}$ is odd and $b_{j_2}=0$, we have $x_{j_2+1}=(x_{j_2}-1)/2$ and $b_{j_2+1}=1$. Hence, $2x_{j_2+1}=x_{j_2}-1$. Therefore, the \emph{advancing task} of phase $(j_2+1)$ is performed by only $x_{j_2}-1$ of the $x_{j_2}$ \emph{surviving advancers} of phase $j_2$. Consequently, one of these \emph{surviving advancers} does not participate in the \emph{advancing task} and remains unused. Hence, at the end of phase $(j_2+1)$, exactly $x_{j_2+1}$ \emph{surviving advancers} of phase $(j_2+1)$, together with the one unused cop, remain among the previously accounted $\hat{\rho}+1$ \texttt{helpers}.

For every $j_2+2\le i\le j_3$, we have $x_{i-1}$ even and hence $2x_i=x_{i-1}$ and $b_i=b_{i-1}=1$. Therefore, for every such phase, the \emph{surviving advancers} of phase $(i-1)$ exactly suffice to perform the \emph{advancing task} of phase $i$. Hence, at the end of phase $j_3$, exactly $x_{j_3}$ \emph{surviving advancers} of phase $j_3$, together with the one unused cop, remain among the previously accounted $\hat{\rho}+1$ \texttt{helpers}.

The above argument extends similarly to the remaining phases. More precisely, for every $j_{2q-1}+2\le i\le j_{2q}$, where $q\ge1$ and $j_{2q}$ exists, we have $x_{i-1}$ even and hence $2x_i=x_{i-1}$ and $b_i=b_{i-1}=0$. Therefore, for every such phase, the \emph{surviving advancers} of phase $(i-1)$ exactly suffice to perform the \emph{advancing task} of phase $i$. Hence, at the end of phase $j_{2q}$, exactly $x_{j_{2q}}$ \emph{surviving advancers} of phase $j_{2q}$ remain alive.

For phase $i=(j_{2q}+1)$, since $x_{j_{2q}}$ is odd and $b_{j_{2q}}=0$, we have $x_{j_{2q}+1}=(x_{j_{2q}}-1)/2$ and $b_{j_{2q}+1}=1$. Hence, $2x_{j_{2q}+1}=x_{j_{2q}}-1$. Therefore, one of the \emph{surviving advancers} of phase $j_{2q}$ does not participate in the \emph{advancing task} of phase $(j_{2q}+1)$ and remains unused.

For every $j_{2q}+2\le i\le j_{2q+1}$, where $j_{2q+1}$ exists, we have $x_{i-1}$ even and hence $2x_i=x_{i-1}$ and $b_i=b_{i-1}=1$. Therefore, for every such phase, $x_i$ of the \emph{surviving advancers} of phase $(i-1)$ suffice to perform the \emph{advancing task} of phase $i$. Hence, at the end of phase $j_{2q+1}$, exactly $x_{j_{2q+1}}$ \emph{surviving advancers} of phase $j_{2q+1}$, together with the previously unused cop, remain among the previously accounted $\hat{\rho}+1$ \texttt{helpers}.

Finally, for phase $i=(j_{2q+1}+1)$, since $x_{j_{2q+1}}$ is odd and $b_{j_{2q+1}}=1$, we have $x_{j_{2q+1}+1}=(x_{j_{2q+1}}+1)/2$ and $b_{j_{2q+1}+1}=0$. Hence, $2x_{j_{2q+1}+1}=x_{j_{2q+1}}+1$. Therefore, the previously unused \emph{advancer} together with $x_{j_{2q+1}}$ of the \emph{surviving advancers} of phase $j_{2q+1}$ provide the $2x_{j_{2q+1}+1}$ \emph{helpers} required for the \emph{advancing task} of phase $(j_{2q+1}+1)$.

Consequently, throughout the execution of Step~2, the total number of \texttt{helpers} that ever leave the boundary vertices and move toward the unsafe zone during the advancing sub-phases is at most $\hat{\rho}+1$.
\end{proof}

\begin{lemma}\label{lem:checker_helper}
If at least $\rho+\lfloor \log \rho \rfloor+2$ cops initiate Step~2, then at the beginning of every checking sub-phase of Step~2, there exists at least one \texttt{helper} at the vertex occupied by the \texttt{guard} $c_g$ with $c_g.\text{lead}=1$.
\end{lemma}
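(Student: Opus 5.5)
We argue by counting: the cops consumed by Step~1 and by the advancing sub-phases, together with the two \texttt{guards} and the \texttt{helpers} sent to the other side by balancing, can never exhaust the supply at the \texttt{guard} $c_g$ with $c_g.lead=1$. Let $N$ be the number of alive cops at the start of Step~2. By Lemma~\ref{lem:same_parameters} every cop knows $N$, and by Lemma~\ref{lem:rho-determine} every cop computes the same $\hat\rho\ge\rho$ with $f(\hat\rho)=\hat\rho+\lfloor\log\hat\rho\rfloor+2\le N$. Of these $N$ cops, two are \texttt{guards}, so $N-2\ge\hat\rho+\lfloor\log\hat\rho\rfloor$ are \texttt{helpers}. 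The only cops that can die in Step~2 are \texttt{helpers} that move into the unsafe zone: advancers, and the \texttt{checker} of the checking sub-phase, which probes only vertices already occupied by surviving advancers and therefore cannot die except at its last probe. By Lemma~\ref{lem:advancing}, at most $\hat\rho+1$ distinct \texttt{helpers} ever leave the boundary vertices as advancers across all phases.

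Fix a phase $i\le h-1$ (the last phase has no checking sub-phase). At the start of its checking sub-phase, the \texttt{helpers} at $c_g$'s vertex are exactly those assigned there by the preceding balancing sub-phase (or by Step~1 when $i=1$) that did not advance in phase $i$. First, we bound the number of \texttt{helpers} present at the boundary vertices at the start of phase $i$ but not in use as advancers. The checking sub-phase of each earlier phase loses at most one \texttt{checker}, which happens only when it fails to return. Over at most $h-1=\lfloor\log\hat\rho\rfloor-1$ earlier phases this costs at most $\lfloor\log\hat\rho\rfloor-1$ cops. Hence the total loss outside advancing is at most $\lfloor\log\hat\rho\rfloor-1$. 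Combining this with Lemma~\ref{lem:advancing}, the number of \texttt{helpers} never used as advancers is at least
\[
(N-2)-(\hat\rho+1)-(\lfloor\log\hat\rho\rfloor-1)\ \ge\ 0 .
\]

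This bound alone is not enough: we need at least one such \texttt{helper} specifically at $c_g$'s vertex, and balancing splits the \texttt{helpers} roughly in half. The key refinement is that surviving advancers of phase $i-1$ are gathered back by balancing, and a phase only consumes $x_i$ of them per side. So at $c_g$'s vertex, at the start of phase $i$, there are $\lceil\alpha/2\rceil$ \texttt{helpers}, where $\alpha$ is the number gathered. This includes at least $x_{i-1}$ surviving advancers plus all spare \texttt{helpers}, while only $x_i\le\lceil x_{i-1}/2\rceil$ leave in the advancing sub-phase. The plan is to show $\lceil\alpha/2\rceil\ge x_i+1$. This follows from $\alpha\ge 2x_i+1$, which holds when $x_{i-1}$ is even, or when there is a spare \texttt{helper} in the odd cases. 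The buffer bit $b_i$ of Equation~\eqref{eq:x_i d_i} tracks exactly when the one extra cop is needed. For $i=1$, the Step~1 split gives $\lceil(N-2)/2\rceil\ge\lceil(\hat\rho+\lfloor\log\hat\rho\rfloor)/2\rceil$ at the lead vertex against $x_1\le\lceil\hat\rho/2\rceil$. This leaves a surplus whenever $\lfloor\log\hat\rho\rfloor\ge1$; the degenerate case $\hat\rho=1$ has $h=0$, so no checking sub-phase occurs.

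The main obstacle is this per-phase bookkeeping. We must show that the spare count $\alpha-2x_i$ stays at least $1$ in every phase even after losing up to one \texttt{checker} per earlier checking sub-phase. The plan is to carry an invariant, proved by induction on $i$, stating that the number of non-advancing \texttt{helpers} alive at the start of phase $i$ is at least $\lfloor\log\hat\rho\rfloor-(i-1)+b_{i-1}$ minus the buffer already spent. Since $i\le h-1$, this quantity is at least $1$, and because \textsc{Balance}() keeps the larger half (the ceiling) at $c_g$, at least one of these spares sits at $c_g$'s vertex.
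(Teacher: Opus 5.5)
Your reduction is the right one, and on the location question it is more careful than the paper. The paper counts at least $\lfloor\log\hat\rho\rfloor-1$ never-advancing \texttt{helpers}, loses at most one per checking sub-phase, and then places a survivor at $c_g$'s vertex in a single sentence about the ceiling split. You instead isolate what is actually needed: $\lceil\alpha_i/2\rceil\ge x_i+1$, i.e.\ $\alpha_i\ge 2x_i+1$, where $\alpha_i$ is the number of alive \texttt{helpers} at the start of phase $i$.

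The gap is that this inequality, which carries the whole lemma in your route, is never established. The invariant in your last paragraph (``$\lfloor\log\hat\rho\rfloor-(i-1)+b_{i-1}$ minus the buffer already spent'') is not a well-defined quantity. The case analysis you offer toward it is also inaccurate. Write $\alpha_i\ge x_{i-1}+s$ with $s$ spares. Then the even case is not automatic and needs $s\ge 1$; the case $x_{i-1}$ odd with $b_{i-1}=1$ needs $s\ge 2$; only the case $x_{i-1}$ odd with $b_{i-1}=0$ needs nothing.

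There are two smaller slips. First, your global count ends at $\ge 0$ because you subtract all $h-1$ \texttt{checker} losses, rather than only the $i-1$ incurred before phase $i$'s checking sub-phase; the paper's version of the count gives $\ge h-i\ge 1$. Second, your base-case surplus claim fails for $\hat\rho=3$. This is harmless, since phase~1 has a checking sub-phase only when $h\ge 2$.

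What closes the argument is the invariant
\[
\alpha_i\;\ge\;2x_i+(h-i)+b_i\qquad(1\le i\le h),\qquad h=\lfloor\log\hat\rho\rfloor .
\]
\textbf{Base case.} Using $2x_1=\hat\rho+1-b_1$, we have $\alpha_1=N-2\ge\hat\rho+h=2x_1+(h-1)+b_1$.

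\textbf{Step.} The three cases of the recurrence for $(x_i,b_i)$ combine into the single identity $2x_{i+1}=x_i+b_i-b_{i+1}$. Also $\alpha_{i+1}\ge\alpha_i-x_i-1$, since at most $x_i$ advancers and one \texttt{checker} die in phase $i$ and none die during balancing. Hence
\[
\alpha_{i+1}\ge x_i+(h-i-1)+b_i=2x_{i+1}+(h-i-1)+b_{i+1}.
\]

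For $i\le h-1$ this yields $\alpha_i\ge 2x_i+1$. So $c_g$'s vertex holds $\lceil\alpha_i/2\rceil\ge x_i+1$ \texttt{helpers} at the start of phase $i$, and keeps at least one after the $x_i$ advancers leave. With this invariant written out and checked, your argument is complete, and it justifies the location step rather than relying on the even split alone.
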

\begin{proof}
Recall that $N$ denotes the number of cops initiating Step~2. By Lemma~\ref{lem:rho-determine}, since at least $\rho+\lfloor\log\rho\rfloor+2$ cops initiate Step~2, every cop determines the same value $\hat{\rho}\ge\rho$ satisfying $N\ge\hat{\rho}+\lfloor\log\hat{\rho}\rfloor+2$.

Among the $N$ cops initiating Step~2, exactly two are \texttt{guards}. Hence, there are $N-2$ \texttt{helpers}. By Lemma~\ref{lem:advancing}, at most $\hat{\rho}+1$ \texttt{helpers} ever leave the boundary vertices and move toward the unsafe zone during the advancing sub-phases of Step~2. Therefore, at least $(N-2)-(\hat{\rho}+1)\ge\lfloor\log\hat{\rho}\rfloor-1$ \texttt{helpers} never leave the boundary vertices during any advancing sub-phase. We refer to these \texttt{helpers} as \emph{non-advancers}.

Step~2 consists of $h=\lfloor\log\hat{\rho}\rfloor$ phases, and only the first $h-1=\lfloor\log\hat{\rho}\rfloor-1$ phases contain a checking sub-phase. During each checking sub-phase, exactly one \texttt{helper} changes its role to \texttt{checker} and executes the procedure \textsc{Advance\_Check}, in which it moves into the unsafe zone (see Algorithm~\ref{alg:step2b}). Therefore, at most one \emph{non-advancer} can be eliminated during each checking sub-phase. Since there are $\lfloor\log\hat{\rho}\rfloor-1$ \emph{non-advancers} and the same number of checking sub-phases, at least one \emph{non-advancer} is alive at the beginning of every checking sub-phase. Therefore, it remains to show that at least one \emph{non-advancer} is present at the vertex occupied by the \texttt{guard} $c_g$ with $c_g.\texttt{lead}=1$ at the beginning of every checking sub-phase.

At the completion of Step~1 and after every balancing sub-phase thereafter, all alive \texttt{helpers} are distributed between the two boundary vertices as evenly as possible. If the number of alive \texttt{helpers} is odd, the boundary vertex occupied by the \texttt{guard} $c_g$ with $c_g.\texttt{lead}=1$ receives one additional \texttt{helper} (see Algorithm~\ref{alg:step2c}). Hence, at least one \emph{non-advancer} is present at the vertex occupied by $c_g$ at the beginning of every checking sub-phase.

Therefore, at the beginning of every checking sub-phase of Step~2, there exists at least one \texttt{helper} at the vertex occupied by the \texttt{guard} $c_g$ with $c_g.\texttt{lead}=1$.

\end{proof}

\begin{lemma}\label{lem:synchronous_execution}
Throughout the execution of Step~2, all alive cops execute every phase and every corresponding sub-phase synchronously.
\end{lemma}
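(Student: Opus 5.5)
The proof goes by induction on the phase index $i$, with an inner argument for each of the three sub-phases. The invariant is: at the start of sub-phase $j$ of phase $i$, every alive cop has $c.step=2$, $c.phase=i$ and $c.subphase=j$, has $c.count_j=1$, and holds the same values $x_i,b_i,d_i$.

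For the base case, Lemma~\ref{lem:same_parameters} shows that all alive cops start Step~2 in the same round with identical $\bar\rho$ and $N$. Lemma~\ref{lem:rho-determine} then gives each cop the same $\hat\rho$ and hence the same $h$. The quantities $x_i,b_i,d_i$ are defined in Equation~\eqref{eq:x_i d_i} as deterministic functions of $\hat\rho$ and $\bar\rho$ alone. So every cop computes identical sequences, and in particular identical sub-phase lengths: $x_i+1$ for advancing, $T_i+1$ for checking and $3d_i+1$ for balancing.

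For the inductive step, the key observation is that every sub-phase transition is triggered only by a counter reaching a threshold. In Algorithm~\ref{alg:step2a} the trigger is $c.count_1=x_i+1$; in Algorithm~\ref{alg:step2b} it is $c.count_2=T_i+1$; in Algorithm~\ref{alg:step2c} it is $c.count_3=3d_i+1$. No transition depends on an event such as an arrival or a message. Each counter is incremented at the end of every round regardless of what the cop is doing, including rounds spent inside \textsc{Advance}, \textsc{Advance\_Check}, \textsc{MoveTill} or idling. Therefore, if all alive cops enter a sub-phase in the same round with the counter at $1$, they all reach the threshold in the same round and switch together. Eliminated cops play no role, since the claim concerns alive cops only.

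What remains is to check that no cop is still in the middle of a multi-round procedure when its threshold round arrives; otherwise its switch would be delayed. This is the main obstacle. I would bound the duration of each sub-phase's activities separately:
\begin{itemize}
\item \emph{Advancing.} The last helper departs at count $x_i$ and every \textsc{Advance} ends by count $x_i$. The transition is therefore taken at count $x_i+1$.
\item \emph{Checking.} The probes of the \texttt{checker} take at most $2(1+\cdots+x_i)+2x_i$ rounds in total, counting the final probe at distance $x_i+1$ and the waiting allowances. After that, the lead \texttt{guard} advances at most $x_i$ hops, and the messenger helper crosses a safe zone of length at most $d_i$. The non-lead \texttt{guard} starts its advance no later than count $T_i-x_i$, so it finishes by count $T_i$. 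All of this fits within $T_i$, and every branch ends in ``remain idle until $c.count_2=T_i$''.
\item \emph{Balancing.} Every helper reaches the lead \texttt{guard} within $2d_i$ rounds, since the safe zone has length $d_i$ and a helper may first return along $c.portin$ before crossing it. \textsc{Balance} starts at count $2d_i$ and each moving helper needs at most $d_i$ further rounds. Everything is complete by count $3d_i$.
\end{itemize}
With these bounds every cop is idle and reads its counter exactly at the threshold, so the transitions are simultaneous and the invariant carries over to phase $i+1$. The induction then covers all $h$ phases, and the last phase, which has only the advancing sub-phase, follows from the first bound alone.
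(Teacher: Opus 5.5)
Your proposal is correct and follows essentially the paper's route: identical $\bar\rho$ and $N$ (Lemma~\ref{lem:same_parameters}) give identical $\hat\rho$, $x_i$, $b_i$, $d_i$ (Lemma~\ref{lem:rho-determine}), hence identical fixed sub-phase lengths $x_i+1$, $T_i+1$ and $3d_i+1$, and induction on the phase index gives simultaneous transitions. Your extra check that no cop is still inside a multi-round procedure when its threshold arrives is something the paper's proof takes for granted (it just asserts the durations hold irrespective of role), and that check has one small slip: the probes up to distance $x_i+1$ cost $2(1+\cdots+x_i)+2x_i+2$ rounds, not $2(1+\cdots+x_i)+2x_i$, and the lead \texttt{guard}'s advance and the messenger's trip run concurrently with that last probe rather than after it (the \texttt{guard} commits by count $2(1+\cdots+x_i)+1$, so the messenger reaches the other \texttt{guard} by count $T_i-x_i$), which is why everything fits within $T_i$.
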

\begin{proof}
By Lemma~\ref{lem:same_parameters}, all alive cops initiate Step~2 simultaneously with identical values of $\bar{\rho}$ and $N$. Therefore, by Lemma~\ref{lem:rho-determine}, every alive cop determines the same value of $\hat{\rho}$. Consequently, every alive cop computes identical values of $h$, $x_i$, $b_i$, and $d_i$ for every phase $i$, since these values depend only on $\hat{\rho}$ (see Equation~\eqref{eq:x_i d_i}).

Consider phase~1. The duration of each sub-phase depends only on the values of $x_1$ and $d_1$, which are identical for all alive cops. In particular, every alive cop executes the advancing, checking, and balancing sub-phases for exactly $x_1+1$, $2(1+2+\cdots+x_1)+2x_1+d_1+2$, and $3d_1+1$ rounds, respectively, irrespective of its role during the execution. Since all alive cops start each sub-phase simultaneously and execute it for the same number of rounds, they complete each sub-phase simultaneously. Therefore, all alive cops complete phase~1 together and start phase~2 in the same round.

Now suppose that, for some $i\geq2$, all alive cops start Phase~$i$ simultaneously. Since every alive cop possesses the same values of $x_i$ and $d_i$, the duration of every sub-phase of phase~$i$ is identical for all alive cops. By the same argument, all alive cops complete each sub-phase of phase~$i$ simultaneously and hence start phase~$i+1$ simultaneously. Therefore, by induction, throughout the execution of Step~2, all alive cops execute every phase and every corresponding sub-phase synchronously.
\end{proof}

\begin{lemma}\label{lem:Sh-bound}
Define $S_k=\sum_{i=1}^{k}x_i$, where $x_i$ is defined by Equation~\eqref{eq:x_i d_i}. Then $S_h\geq\hat{\rho}-1$ for $h=\lfloor \log \hat{\rho} \rfloor$.
\end{lemma}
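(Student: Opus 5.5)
The plan is to turn the recursion in Equation~\eqref{eq:x_i d_i} into a telescoping identity. The only inequality needed is then $x_h\le 1$, which follows because the rounding errors alternate in sign.

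First, I define the rounding term $\varepsilon_i\in\{-1,0,1\}$ for $i\ge1$ by $2x_i=x_{i-1}+\varepsilon_i$. Thus $\varepsilon_i=0$ when $x_{i-1}$ is even, $\varepsilon_i=+1$ when $x_{i-1}$ is odd and $b_{i-1}=1$, and $\varepsilon_i=-1$ when $x_{i-1}$ is odd and $b_{i-1}=0$. Since $b$ changes only at odd steps, flipping $1\to0$ or $0\to1$, and $b_0=1$, the nonzero $\varepsilon_i$ alternate in sign. Moreover, the first nonzero one equals $+1$. This is the same bookkeeping as in the proof of Lemma~\ref{lem:advancing}, with indices $j_1<j_2<\cdots$. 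Two consequences follow: every partial sum $\sum_{j\le i}\varepsilon_j$ lies in $\{0,1\}$, and more generally, any alternating sum with this sign pattern is nonnegative.

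Second, I rewrite $x_i=(x_{i-1}-x_i)+\varepsilon_i$ and sum over $i=1,\dots,h$. This telescopes to
$$S_h=x_0-x_h+\sum_{i=1}^{h}\varepsilon_i\;\ge\;\hat{\rho}-x_h,$$
using $x_0=\hat{\rho}$ and the nonnegativity of the partial sum. So the lemma reduces to showing $x_h\le1$.

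Third, to bound $x_h$, I unroll the recursion with weights. Multiplying $2x_j=x_{j-1}+\varepsilon_j$ by $2^{j-1}$ and telescoping gives
$$2^i x_i=x_0+\sum_{j=1}^{i}2^{j-1}\varepsilon_j.$$
The nonzero terms alternate in sign and have strictly increasing weights, so the sum is dominated by its last nonzero term. Pairing consecutive terms $-2^{a-1}+2^{b-1}$ with $a<b$ handles the case where that last term is $+2^{j-1}$: the sum is at most $2^{j-1}\le 2^{i-1}$. If the last nonzero term is negative, the sum is at most $0$. Hence $x_i\le x_0/2^i+1/2$.

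For $i=h=\lfloor\log\hat{\rho}\rfloor$ we have $2^{h+1}>\hat{\rho}$, so $x_h<2+1/2$, which is not yet enough. I will sharpen the bound using integrality together with the refined weighted bound. When the last nonzero $\varepsilon_j$ is $+1$, the sum is at most $2^{j-1}-2^{j'-1}$, where $j'$ is the previous nonzero index, or at most $1$ if there is no earlier one. This gives $x_h\le \hat{\rho}/2^h+1/2$ with $\hat{\rho}/2^h<2$, so $x_h\le 2$.

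The case $x_h=2$ needs a separate argument. Here I would use that $S_h$ includes the term $\varepsilon$-sum $=1$ precisely when the last nonzero $\varepsilon$ is $+1$, which is exactly the case where $x_h$ could be pushed up to $2$. The bad case $x_h=2$ thus comes with $\sum\varepsilon_i=1$, giving $S_h\ge\hat{\rho}-2+1$.

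Otherwise the last nonzero $\varepsilon$ is $-1$ or there is none. Then $2^h x_h\le x_0<2^{h+1}$, so $x_h\le1$, and $S_h\ge\hat{\rho}-1$.

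The main obstacle is this last case split. One has to check carefully that a rounding-up step can raise $x_h$ to $2$ only when the cumulative $\varepsilon$-sum is $1$. That check is where I would spend the effort, verifying it directly from the alternation established in the first step.
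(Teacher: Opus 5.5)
Your proof is correct, but it takes a different route from the paper.

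\textbf{Comparison with the paper.} The paper writes each $x_i$ explicitly as $\hat{\rho}/2^i$ plus alternating terms $\pm 2^{-(i-j_k)}$, one for each odd index $j_k<i$. It then sums each rounding contribution over $i$ as a partial geometric series and uses that the resulting alternating sum of decreasing terms is nonnegative. This gives $S_h\ge\hat{\rho}(1-2^{-h})>\hat{\rho}-2$ in one chain of inequalities, and integrality finishes. You instead telescope to the exact identity $S_h=\hat{\rho}-x_h+E_h$, where $E_h=\sum_{i\le h}\varepsilon_i\in\{0,1\}$ (in fact $E_h=1-b_h$). This reduces the question to comparing the final step $x_h$ with the parity buffer. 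You settle that with $2^hx_h=\hat{\rho}+\sum_{j\le h}2^{j-1}\varepsilon_j$ and a split on the sign of the last nonzero $\varepsilon_j$. Both arguments rest on the same alternation fact. Yours avoids the nested geometric sums and isolates exactly where the deficit $x_h-E_h$ comes from. The paper's version needs no case analysis.

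\textbf{Points to fix in your write-up.}

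\emph{The ``main obstacle'' is already settled.} If the last nonzero $\varepsilon_j$ with $j\le h$ is $+1$, the alternation forces $E_h=1$. Together with $x_h\le 2$ this gives $S_h\ge\hat{\rho}-1$. Otherwise $x_h\le 1$ and $E_h\ge 0$ suffice. Nothing further needs checking.

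\emph{The pairing is reversed.} Blocks of the form $-2^{a-1}+2^{b-1}$ with $a<b$ are positive, so they only give a lower bound. For the upper bound, pair each $+2^{a-1}$ with the following $-2^{b-1}$ and leave the last term alone.

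\emph{The ``refined'' bound is false in both branches.}
For $\hat{\rho}=13$: $h=3$, $(x_0,\dots,x_3)=(13,7,3,2)$, $\varepsilon=(1,-1,1)$, and the weighted sum is $3>2^2-2^1$.
For $\hat{\rho}=6$: $\varepsilon=(0,1)$, and the weighted sum is $2>1$.
In both cases the refined bound would force $x_h\le 1$, whereas actually $x_h=2$.

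The refinement is also unnecessary. The crude bound $2^{j-1}\le 2^{h-1}$ already gives $x_h\le\hat{\rho}/2^h+1/2<5/2$, so simply drop it. These same examples show that the case $x_h=2$ really occurs, with $S_h=\hat{\rho}-1$ exactly. Your use of $E_h$ there is therefore essential, and the lemma's bound is tight.
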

\begin{proof}

We first derive an explicit expression for $S_h$ using the recurrence in Equation~\eqref{eq:x_i d_i}, and then obtain the required bound. 
Let $0 \le j_1 < j_2 < \cdots < j_m \le h$ be the indices such that $x_{j_\ell}$ is odd for all $\ell \in \{1,2,\dots,m\}$.

Since $j_1$ is the first such index, for all $0\le i \le j_1$, we have $x_i = \frac{\hat{\rho}}{2^i}$ and $b_i=1$. Hence, $S_{j_1} = \sum_{i=1}^{j_1} \frac{\hat{\rho}}{2^i}.$

At index $j_1$, the value $x_{j_1}$ is odd and $b_{j_1}=1$, so
$x_{j_1+1} = \frac{x_{j_1}+1}{2}$ and $b_{j_1+1}=0$.
Expanding,

\begin{align}    
x_{j_1+1} &= \frac{\hat{\rho}}{2^{j_1+1}} + \frac{1}{2}, \\
x_{j_1+2} &= \frac{\hat{\rho}}{2^{j_1+2}} + \frac{1}{2^2}, \\
x_{j_1+3} &= \frac{\hat{\rho}}{2^{j_1+3}} + \frac{1}{2^3}, \\ 
&\vdots
\end{align}

Thus, for all $j_1 < i \le j_2$,
$x_i = \frac{\hat{\rho}}{2^i} + \frac{1}{2^{i-j_1}},$ and $b_i=0$. Hence,
$$
S_{j_2} = \sum_{i=1}^{j_2} \frac{\hat{\rho}}{2^i} + \sum_{i=1}^{j_2-j_1} \frac{1}{2^i}.
$$

At index $j_2$, the value $x_{j_2}$ is odd and $b_{j_2}=0$, so $x_{j_2+1} = \frac{x_{j_2}-1}{2}$ and $b_{j_2+1}=1$.
Expanding,
$$
\begin{aligned}
x_{j_2+1} &= \frac{\hat{\rho}}{2^{j_2+1}} + \frac{1}{2^{j_2+1-j_1}} - \frac{1}{2}, \\
x_{j_2+2} &= \frac{\hat{\rho}}{2^{j_2+2}} + \frac{1}{2^{j_2+2-j_1}} - \frac{1}{2^2}, \\
x_{j_2+3} &= \frac{\hat{\rho}}{2^{j_2+3}} + \frac{1}{2^{j_2+3-j_1}} - \frac{1}{2^3}, \\
&\vdots
\end{aligned}
$$

Thus, for all $j_2 < i \le j_3$, $x_i = \frac{\hat{\rho}}{2^i} + \frac{1}{2^{i-j_1}} - \frac{1}{2^{i-j_2}},$ and $b_i=1$. Hence,
$$
S_{j_3} = \sum_{i=1}^{j_3} \frac{\hat{\rho}}{2^i} + \sum_{i=1}^{j_3-j_1} \frac{1}{2^i} - \sum_{i=1}^{j_3-j_2} \frac{1}{2^i}.
$$

Proceeding similarly, we obtain
\begin{flalign}
&&
S_h = \sum_{i=1}^{h} \frac{\hat{\rho}}{2^i}
+ \left(\sum_{i=1}^{h-j_1} \frac{1}{2^i}
- \sum_{i=1}^{h-j_2} \frac{1}{2^i}
+ \cdots
+ (-1)^{m+1} \sum_{i=1}^{h-j_m} \frac{1}{2^i} \right)
&& \label{eq:Sh-expression}
\end{flalign}

Since $j_1 < j_2 < \cdots < j_m$, we have $h-j_1 > h-j_2 > \cdots > h-j_m$, and hence
$$
\sum_{i=1}^{h-j_1} \frac{1}{2^i} \ge \sum_{i=1}^{h-j_2} \frac{1}{2^i}\ge \cdots \ge \sum_{i=1}^{h-j_m} \frac{1}{2^i}.
$$

Therefore,
\begin{flalign}
&&
\left( \sum_{i=1}^{h-j_1} \frac{1}{2^i}-\sum_{i=1}^{h-j_2} \frac{1}{2^i}+\sum_{i=1}^{h-j_3} \frac{1}{2^i}-\cdots+(-1)^{m+1}\sum_{i=1}^{h-j_m} \frac{1}{2^i} \right)\;\ge\; 0.
&&\label{eq:extra-positive}
\end{flalign}

Hence, from Equations~\eqref{eq:Sh-expression} and~\eqref{eq:extra-positive}, we obtain $S_h \ge \sum_{i=1}^{h} \frac{\hat{\rho}}{2^i}$. Since $h=\lfloor \log \hat{\rho} \rfloor$, we have $2^h \le \hat{\rho} < 2^{h+1}$, and hence $\frac{\hat{\rho}}{2^h}<2$. Therefore, $\sum_{i=1}^{h} \frac{\hat{\rho}}{2^i}=\hat{\rho}\left(1-\frac{1}{2^h}\right)>\hat{\rho}-2$. It follows that $S_h>\hat{\rho}-2$. Since $S_h=\sum_{i=1}^{h}x_i$ and each $x_i$ is an integer, $S_h$ is also an integer. Therefore, $S_h\ge\hat{\rho}-1$, proving the lemma.
\end{proof}

\begin{lemma}\label{lem:checker-availability}
Step~2 of \texttt{Capture-LR} can be executed successfully, and the robber is surrounded, whenever at least $\rho+\lfloor \log \rho \rfloor+2$ cops initiate Step~2. 
\end{lemma}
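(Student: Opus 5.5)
We will prove Lemma~\ref{lem:checker-availability} by an invariant argument over the phases of Step~2, combining the lemmas already established. By Lemma~\ref{lem:same_parameters} and Lemma~\ref{lem:synchronous_execution}, all alive cops start Step~2 together with identical $\hat\rho\ge\rho$ (Lemma~\ref{lem:rho-determine}), $h$, $x_i$, $b_i$, $d_i$, and execute all sub-phases in lockstep. We will therefore reason phase by phase and maintain the following invariant at the start of phase $i$:
\begin{itemize}
\item the two \texttt{guards} sit at the two boundary vertices of a safe zone of length $d_{i-1}$, each knowing its correct unsafe port;
\item enough \texttt{helpers} are available at the boundary vertices to supply the $x_i$ advancers needed on each side, by Lemma~\ref{lem:advancing};
\item the \texttt{guard} with $lead=1$ has at least one spare \texttt{helper} to act as \texttt{checker}, by Lemma~\ref{lem:checker_helper}.
\end{itemize}

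\textbf{Advancing sub-phase.} We will show the safe zone grows by exactly $x_i$ vertices in total across the two sides. The $x_i$ advancers from each side are pipelined, so after the sub-phase each side has filled a contiguous segment up to the first vertex where a cop was killed. The robber, sitting at a single vertex in the unsafe zone, can block only one side at a time. Hence if $e$ vertices are gained on one side, the other side gains at least $x_i-e$. Concretely, the robber must eventually sit strictly between the two advancing fronts, and it cannot pass through occupied vertices. So the combined safe extension is at least $x_i$ as long as $d_{i-1}+x_i\le n-1$, i.e.\ the robber is not yet surrounded.

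\textbf{Checking sub-phase.} Here we argue that the \texttt{checker} sent by the $lead=1$ \texttt{guard} correctly measures $extend$ on its side. It probes increasing distances, and a missing or absent return pinpoints the first empty vertex. The \texttt{checker} may die, but this cost is already counted in Lemma~\ref{lem:checker_helper}. The messenger then carries $extend$ to the $lead=0$ \texttt{guard} within $T_i-x_i$ rounds. That \texttt{guard} advances $x_i-extend$ hops along vertices already occupied by surviving advancers, so neither \texttt{guard} ever steps onto the robber. After this sub-phase the safe zone has length $d_i=d_{i-1}+x_i$, and the guards stand at its new boundaries.

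\textbf{Balancing sub-phase.} The \texttt{helpers} travel only through the safe zone, so they are safe. Within $2d_i$ rounds they gather, and within a further $d_i$ rounds they are redistributed, which restores the invariant for phase $i+1$.

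\textbf{Termination.} After the final advancing sub-phase, the total length reached is
\[
\bar\rho+S_h\;\ge\;\bar\rho+\hat\rho-1\;\ge\;\bar\rho+\rho-1=n-1
\]
by Lemma~\ref{lem:Sh-bound}, using $\rho=n-\bar\rho$. The safe zone thus covers $n-1$ vertices, so the robber's vertex is the only remaining one and both its neighbours are occupied: the robber is surrounded.

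\textbf{Main obstacle.} The delicate step is the advancing claim: that pipelined advancing guarantees a combined gain of exactly $x_i$, and not fewer, despite an adversarial robber of unbounded speed. We will handle it by fixing the robber's position at each cop round. Once a cop is killed at vertex $u$ on one side, every later advancer on that side also dies at or before $u$. Meanwhile the robber, confined between the fronts, cannot simultaneously threaten the other side. A counting argument over the $x_i$ rounds, mirroring the argument in Lemma~\ref{lem:advancing}, then gives the bound.

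The last phase needs separate treatment. Overshoot is possible there, namely $\bar\rho+S_h>n-1$, and we must show it does no harm. Advancers only move onto unoccupied unsafe vertices, and any such move that would enter the robber's vertex only kills that advancer without shrinking the already closed zone. So we argue the phase stops being relevant once the robber is surrounded.
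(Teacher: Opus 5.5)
Your outline matches the paper's proof: synchrony from Lemma~\ref{lem:synchronous_execution}, helper availability from Lemmas~\ref{lem:advancing} and~\ref{lem:checker_helper}, a gain of at least $x_i$ per phase, and $S_h\ge\hat\rho-1\ge\rho-1$ from Lemma~\ref{lem:Sh-bound}. The gap is in the step you yourself flag as the main obstacle: the justification you give there is false. You claim that each side fills a contiguous segment ``up to the first vertex where a cop was killed''. You also claim that once an advancer dies at $u$, ``every later advancer on that side also dies at or before $u$''. Both ignore that the robber moves between cop rounds with unbounded speed. It can sit on the vertex $u$ in front of one side's leading advancer and kill it. On its next turn it can run through the cop-free interior to the other front. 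The next advancer on the first side then enters $u$ safely and, if unobstructed, continues past it. So the picture behind your ``hence the other side gains at least $x_i-e$'' (one side blocked for good, the other gaining the rest) is not what happens, and that inference is unsupported as written.

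What does work is a per-round count; the paper's sentence ``the robber can eliminate helpers on at most one side in each round'' compresses it.
\begin{itemize}
\item Every non-front advancer steps onto a vertex that was occupied during the robber's preceding turn, so only a side's current front can die.
\item Unless the robber is already surrounded, at most one advancer enters its vertex per round. So at most $x_i$ of the $2x_i$ advancers die over the $x_i$ movement rounds.
\item Each death removes the alive advancer with the largest target. Hence the survivors on each side are those with the smallest targets, and they end on a contiguous segment next to the boundary whose length equals their number.
\end{itemize}
This gives a combined gain of at least $x_i$. It also shows that the \texttt{guard} with $lead=0$ walks only over occupied vertices when it advances $x_i-\mathit{extend}$ hops.

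Two smaller repairs are needed. First, Lemma~\ref{lem:advancing} only bounds how many helpers are consumed. Availability needs the paper's count $(\hat\rho+1)+(\lfloor\log\hat\rho\rfloor-1)+2=f(\hat\rho)\le N$, with $N\ge f(\hat\rho)$ taken from Lemma~\ref{lem:rho-determine}. Second, $\bar\rho+S_h\ge n-1$ bounds the length $|Z|-1$, so it gives $|Z|\ge n$, which cannot hold while the robber is free. You should therefore conclude, as the paper does, by contradiction from the assumption that the robber is never surrounded. That framing also handles overshoot in every phase, not only the last; overshoot can already occur in phase~1 when $\hat\rho$ is much larger than $\rho$.
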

\begin{proof}
Recall that $N$ denotes the number of cops initiating Step~2. By Lemma~\ref{lem:rho-determine}, since at least $\rho+\lfloor \log \rho \rfloor+2$ cops initiate Step~2, every cop determines the same value $\hat{\rho}\geq\rho$ satisfying $N\geq\hat{\rho}+\lfloor \log \hat{\rho} \rfloor+2=f(\hat{\rho})$.
By Lemma~\ref{lem:advancing}, at most $\hat{\rho}+1$ \texttt{helpers} ever leave the boundary vertices and move toward the unsafe zone during the advancing sub-phases of Step~2. This bound already accounts for any cops eliminated by the robber while advancing. In addition, there are $\lfloor \log \hat{\rho} \rfloor-1$ checking sub-phases, and at most one cop, namely the one acting as the \texttt{checker}, may be eliminated in each checking sub-phase; no cop is eliminated during any balancing sub-phase. Two cops permanently remain \texttt{guards} throughout Step~2. Hence, the total number of cops required throughout Step~2 is
$(\hat{\rho}+1)+(\lfloor \log \hat{\rho} \rfloor-1)+2=\hat{\rho}+\lfloor \log \hat{\rho} \rfloor+2=f(\hat{\rho})$. 
Since $N\geq f(\hat{\rho})$, at least this many cops initiate Step~2.

By Lemma~\ref{lem:checker_helper}, a \texttt{helper} is always available at the vertex of the \texttt{guard} with $c_g.\text{lead}=1$ at the beginning of every checking sub-phase. Hence, every checking sub-phase has a \texttt{helper} available to act as the \texttt{checker}. By Lemma~\ref{lem:synchronous_execution}, all alive cops execute every phase and every corresponding sub-phase synchronously. Therefore, Step~2 never stalls and executes successfully to completion.

It remains to show that the robber is surrounded upon the completion of Step~2. At the completion of Step~1, the safe zone contains $n-\rho+1$ vertices. Suppose, for the sake of contradiction, that the robber is not surrounded upon the completion of Step~2. In the advancing sub-phase of every phase $i\geq 1$, $x_i$ \texttt{helpers} from each boundary vertex move simultaneously toward the unsafe zone (see Algorithm~\ref{alg:step2a}, Lines~9--23). Since the robber can eliminate \texttt{helpers} on at most one side in each round, at least $x_i$ \texttt{helpers} survive the advancing sub-phase. Thus, the size of the safe zone increases by at least $x_i$ during phase $i$, and therefore by at least $S_h$ over the $h$ advancing sub-phases. By Lemma~\ref{lem:Sh-bound}, $S_h\geq\hat{\rho}-1$. Since $\hat{\rho}\geq\rho$, we have $S_h\geq\rho-1$. Thus, after the $h$ advancing sub-phases, the safe zone contains at least $n-\rho+1+(\rho-1)=n$ vertices. Therefore, the safe zone covers the entire ring, which implies that the robber is surrounded, a contradiction. Hence, the robber is surrounded upon the completion of Step~2.
\end{proof}

\begin{lemma}\label{lem:complexity}
Algorithm \texttt{Capture-LR} requires $O(\log n)$ bits of memory per cop and surrounds the robber within $O(n^3)$ rounds.
\end{lemma}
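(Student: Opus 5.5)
The memory and time bounds can be treated separately. Memory will come from an inventory of each cop's stored parameters. The round count will come from adding the durations of Step~1 and Step~2, using the fact that every quantity involved is at most $n$ (or polynomial in $n$).

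\textbf{Memory.} I would list every variable a cop may hold across all roles:
\begin{itemize}
\item identifiers: its own ID, $checkID_p$, $operatorID$, $encID$, all in $[1,n^\lambda]$;
\item distances and timers: $dist_p$, $time_p$, $\bar\rho$, $hops$, $extend$;
\item counters: $N$, $\hat\rho$, $h$, and the phase parameters $x_i,b_i,d_i$;
\item sub-phase counters: $count_1,count_2,count_3$;
\item constant-size flags: $role$, $step$, $safe_p$, $lead$, $uport$, $balance$, $meet_p$, $enctype_p$, $advance$, $portin$, $task$, $subphase$, and $phase\le h$.
\end{itemize}
Identifiers need $O(\lambda\log n)$ bits. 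Every distance is at most $n$, since probes stop once they meet a cop or the walk ends at a boundary, and time values are at most $2n$. The number of cops $N$ can be bounded by $O(n)$ wlog, or stored in $O(\log n)$ bits since it is polynomially bounded. Every sub-phase counter is bounded by $T_i+1=O(x_i^2+n)=O(n^2)$. So each cop stores $O(1)$ words of $O(\log n)$ bits.

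\textbf{Time, Step~1.} I would argue that each settler or guard runs probes of increasing distance $1,2,\dots$ on a port. A probe at distance $j$ costs $O(j)$ rounds. The probing distance from any vertex never exceeds the gap to the next occupied vertex, which is at most $n$. The exception is a guard's extended check, which can re-probe the same distance when it meets a settler ($enctype\in\{1,2\}$).

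I expect bounding these repeated probes and the cascading Divide/MoveTill redistributions to be the main obstacle. My plan is to charge the work to vertices. Each time a settler's two ports both become safe, the settler is eliminated as a settler and its helpers move, which can happen at most $n$ times. Between two such events, the delay is $O(n^2)$: at most $n$ distance increments, each probe costing $O(n)$ rounds, plus $O(n)$ waiting for checkers to return. This gives $O(n^3)$ rounds for Step~1.

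The final gathering costs $O(\bar\rho)$ rounds, covering $\textsc{Wait}(2\bar\rho)$ plus $\textsc{Wait}(\bar\rho)$.

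\textbf{Time, Step~2.} By Lemma~\ref{lem:synchronous_execution}, phase $i$ lasts exactly $(x_i+1)+(T_i+1)+(3d_i+1)$ rounds, where $T_i=x_i(x_i+1)+2x_i+d_i+1$. We have $x_i\le\hat\rho/2^{i-1}+1$. We also need $d_i\le n$ and $\hat\rho=O(n)$. The latter holds either because the safe zone is bounded by the ring, or because, when $N$ is much larger than necessary, one caps $\hat\rho$ by the unsafe zone; I would note the assumption $N=O(n)$ here.

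Summing over $h=O(\log n)$ phases, the terms $\sum x_i^2$ form a geometric series and contribute $O(n^2)$. The $d_i$ terms contribute $O(n\log n)$. So Step~2 takes $O(n^2)$ rounds, and the total is $O(n^3)$ rounds.
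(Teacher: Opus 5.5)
Your memory inventory and your vertex-charging argument for Step~1 are about as detailed as the paper's, which charges $O((n-\rho)^2)$ probing plus $O(n-\rho)$ redistribution to each non-boundary settler and $O(n^2)$ to the two boundary vertices. The genuine gap is in Step~2. Your whole computation there rests on $\hat\rho=O(n)$, and neither justification you give holds. The value $\hat\rho=\max\{x: f(x)\le N\}$ is computed from $N$ alone; it is not the length of any zone, so the ring size does not bound it. The cops know neither $n$ nor $\rho$, so ``capping $\hat\rho$ by the unsafe zone'' is not something the algorithm can do. Assuming $N=O(n)$ adds a hypothesis the lemma does not have. The statement covers any number of cops above the threshold, and with distinct IDs in $[1,n^\lambda]$ that number can be as large as $n^\lambda$. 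For such $N$, the first checking sub-phase alone is scheduled to last $T_1=x_1(x_1+1)+2x_1+d_1+1=\Theta(\hat\rho^2)$ rounds, so summing full phase lengths cannot give $O(n^3)$. The same slip appears in your memory count, where the counters are not $O(n^2)$ in general. They are, however, polynomial in $N\le n^\lambda$, which is all that $O(\log n)$ bits requires.

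The missing idea, which is the paper's Case~1, is that the lemma bounds the time until the robber is \emph{surrounded}, not the length of the whole schedule. If $\hat\rho\ge 2n$, then $x_1\ge\hat\rho/2\ge n$. So in the first advancing sub-phase, a pipeline of at least $n$ helpers enters the unsafe zone from each boundary vertex, one per round. The robber is stationary during a cop round, so advancers on at most one side can be killed in that round. A killed lead cop is replaced by the one directly behind it, so that side's frontier does not retreat. Hence the two frontiers together gain at least one vertex per round. The unsafe zone, which has fewer than $n$ vertices, therefore closes around the robber within $O(n)$ rounds of the start of Step~2, before the long checking sub-phase begins. Only in the complementary case $\hat\rho<2n$ do you need to sum phase lengths. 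There $x_i<\hat\rho/2^{i}+1$ and $d_i=\bar\rho+S_i=O(n)$; note that $d_i\le n$ can fail, since $d_h\ge\bar\rho+\hat\rho-1$. With these bounds your geometric-series computation goes through, and it gives $O(n^2)$ for Step~2, slightly better than the paper's $O(n^2\log n)$.
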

\begin{proof}
During the execution of \texttt{Capture-LR}, cops with different roles maintain parameters corresponding to ports, counters, distances, phase variables, and synchronization variables. All such parameters, except those storing cop IDs, take values bounded by a polynomial in $n$. Hence, each such parameter requires at most $O(\log n)$ bits of memory. Certain parameters, such as $checkID$, $encID$, and $operatorID$, store cop IDs. Since each cop ID belongs to the range $[1,n^\lambda]$ for some constant $\lambda$, storing such an ID also requires $O(\log n)$ bits of memory. Moreover, throughout the execution of the algorithm, each cop maintains only a constant number of parameters. Therefore, the total memory required per cop is $O(\log n)$ bits.

We now analyze the time complexity of the algorithm. We first analyze the number of rounds required for the completion of Step~1. During Step~1, a \texttt{checker} executes either the \textsc{Check} or the \textsc{Extended\_Check} procedure (see Algorithms~\ref{alg:step1_checker} and~\ref{alg:step1_helper}). Since cops have bounded speed and can move at most one hop in each round, a \texttt{checker} exploring up to distance $i$ requires $O(i)$ rounds. For a \texttt{settler} located at a non-boundary vertex of the safe zone, the cumulative number of rounds spent in all checking procedures before both corresponding ports are marked safe is $O((n-\rho)^2)$, since the length of the safe zone is $n-\rho$. Once both ports are marked safe, all the cops present at that vertex, together with the assigned \texttt{checkers}, become \texttt{helpers} and move towards the encountered \texttt{settlers}, which requires at most $O(n-\rho)$ further rounds. Upon reaching another \texttt{settler}, a \texttt{helper} may again be assigned as a \texttt{checker} and repeat the checking process from distance $1$. Since at most $n-\rho-1$ non-boundary vertices of the safe zone can give rise to such a checking-and-redistribution process, the total number of rounds contributed across all of them is $O((n-\rho)^3)$.

At the two boundary vertices of the safe zone, the corresponding \texttt{settlers} eventually become \texttt{guards}. Their checking procedures may explore both the safe and unsafe zones. Since the total length of the ring is $n$, the cumulative number of rounds spent in all such checking procedures is bounded by $O(n^2)$. Moreover, the corresponding redistribution procedures require at most $O(n-\rho)$ rounds. Hence, the total number of rounds contributed by the two boundary vertices is $O(n^2+n-\rho)=O(n^2)$. Therefore, the total number of rounds required for the completion of Step~1 is $O((n-\rho)^3)+O(n^2)=O(n^3)$.

We now analyze Step~2. By Lemma~\ref{lem:rho-determine}, every cop computes a value $\hat{\rho}\geq\rho$ and $h=\lfloor\log\hat{\rho}\rfloor$ at the start of Step~2. For each phase $i\in\{1,\dots,h\}$, the advancing sub-phase requires exactly $x_i+1$ rounds, the checking sub-phase requires $2(1+2+\cdots+x_i)+2x_i+d_i+2$ rounds, and the balancing sub-phase requires exactly $3d_i+1$ rounds. Hence, the total number of rounds required in phase $i$ is $(x_i+1)+\bigl(2(1+2+\cdots+x_i)+2x_i+d_i+2\bigr)+(3d_i+1)$. Since $2(1+2+\cdots+x_i)=x_i(x_i+1)$ for every $1\leq i\leq h$, the number of rounds required to complete phase $i$ is $O(x_i^2+d_i)$.

Observe that $\hat{\rho}$ is computed solely from $N$, the number of cops that initiate Step~2.
Since $N$ is independent of the ring size $n$, $\hat{\rho}$ may be either smaller or larger than $2n$. Therefore, we distinguish the following two cases.

\noindent\underline{\textbf{Case 1 ($\hat{\rho}\geq2n$)}}.
By Equation~\eqref{eq:x_i d_i}, $x_1\geq\hat{\rho}/2\geq n$. Thus, during the advancing sub-phase of the first phase, more than $n$ cops advance consecutively from each boundary vertex of the safe zone.
Consequently, irrespective of the initial configuration, the advancing cops traverse the entire unsafe zone, and hence occupy both neighbors of the robber before the first advancing sub-phase terminates. Hence, the robber is surrounded within the first $n$ rounds of Step~2. Therefore, the cops surround the robber within $O(n^3)+O(n)=O(n^3)$ rounds.

\noindent\underline{\textbf{Case 2 ($\hat{\rho}<2n$)}}.
Since $\hat{\rho}=O(n)$, by Equation~\eqref{eq:x_i d_i}, $x_i\leq(x_{i-1}+1)/2$ and $d_i=d_{i-1}+x_i$, where $x_0=\hat{\rho}$ and $d_0=\bar{\rho}$. Hence, $x_i=O(n)$ and $d_i=O(n)$ for every phase $i \in \{1,2,\dots , h\}$. Therefore, the number of rounds required to complete each phase $i$ is $O(x_i^2+d_i)=O(n^2+n)=O(n^2)$. Moreover, since $h=\lfloor\log\hat{\rho}\rfloor$, we have $h=O(\log n)$. By Lemma~\ref{lem:checker-availability}, Step~2 executes successfully and surrounds the robber. Since Step~2 consists of $h$ phases, the robber is surrounded within at most $h$ phases. Hence, Step~2 completes within $O(n^2\log n)$ rounds. Therefore, the cops surround the robber within $O(n^3)+O(n^2\log n)=O(n^3)$ rounds.

Hence, in both cases, the cops surround the robber within $O(n^3)$ rounds.
\end{proof}

\begin{theorem}\label{th:main}
Let $n\geq4$, and let $\rho$ denote the length of the unsafe zone in the initial configuration. Starting from any arbitrary initial configuration on the ring $C_n$, if at least $\max\{\rho+\lfloor \log \rho \rfloor+4,\; n-\rho+4\}$ cops, each equipped with $O(\log n)$ bits of memory, execute the algorithm \texttt{Capture-LR}, then the robber is surrounded within $O(n^3)$ rounds.
\end{theorem}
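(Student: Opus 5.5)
The plan is to assemble the theorem from the lemmas already established, treating Step~1 and Step~2 separately and then adding the complexity bound.

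\emph{Step~1.} Since at least $n-\rho+4$ cops are present, Lemma~\ref{lem:L_condition_step1} shows that the initial configuration satisfies the condition of Lemma~\ref{lem:ns_condition}. That is, some vertex holds at least three cops, or three distinct vertices each hold at least two. Under this condition I would argue that Step~1 does not stall. Some boundary vertex, or some settler reached through \textsc{Divide}/\textsc{MoveTill}, always has a spare \texttt{helper} available to serve as \texttt{checker}. This continues until both boundaries are identified and the two \texttt{guards} meet via an \textsc{Extended\_Check}. Once Step~1 terminates, Lemma~\ref{lem:same_parameters} guarantees that all surviving cops agree on $\bar\rho$ and $N$ and enter Step~2 simultaneously.

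\emph{Number of cops entering Step~2.} By Lemma~\ref{lem:elim}, at most two cops are lost in Step~1. Hence
\[
N \;\ge\; \max\{\rho+\lfloor\log\rho\rfloor+4,\; n-\rho+4\}-2 \;\ge\; \rho+\lfloor\log\rho\rfloor+2 .
\]
This is exactly the hypothesis of Lemmas~\ref{lem:rho-determine}, \ref{lem:checker_helper} and~\ref{lem:checker-availability}. Lemma~\ref{lem:checker-availability} then says that Step~2 runs to completion and the robber is surrounded.

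\emph{Resources.} The memory bound of $O(\log n)$ bits per cop and the $O(n^3)$ round bound come directly from Lemma~\ref{lem:complexity}.

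\emph{Main obstacle.} The delicate point is Step~1. Lemma~\ref{lem:ns_condition} is stated only as a necessary condition, but the argument needs sufficiency: under that condition Step~1 terminates, with exactly one \texttt{guard} at each boundary and all other survivors being \texttt{helpers}. I would prove this by a progress argument with three parts.
\begin{enumerate}
\item Each probe either marks a port safe, marks it unsafe, or increases a probing distance. Distances are bounded by $n$, so only finitely many probes can occur.
\item Settlers whose ports are both safe dissolve into \texttt{helpers}. These helpers migrate along safe edges to unresolved settlers and guards, so helpers are never stranded while unresolved ports remain.
\item After the at most two checker deaths permitted by Lemma~\ref{lem:elim}, at least one \texttt{helper} remains available. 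The counting in Lemma~\ref{lem:L_condition_step1} leaves slack for this, namely $n-\rho+4$ cops against at most $n-\rho+1$ occupied vertices. That remaining helper lets some guard's \textsc{Extended\_Check} reach the other guard, which triggers the gathering and balancing.
\end{enumerate}
I would first check the adversarial case from Lemma~\ref{lem:ns_condition}, where both boundary checkers die at once, and verify that the extra cop guaranteed by the condition suffices to continue from one boundary.
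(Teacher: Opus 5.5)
Your reduction is the paper's own: Lemma~\ref{lem:elim} gives $N\ge k-2\ge\rho+\lfloor\log\rho\rfloor+2$, Lemma~\ref{lem:checker-availability} finishes Step~2, and Lemma~\ref{lem:complexity} gives the memory and round bounds. You are right that the paper gets termination of Step~1 only by citing Lemmas~\ref{lem:ns_condition} and~\ref{lem:L_condition_step1}, even though Lemma~\ref{lem:ns_condition} proves necessity, not sufficiency. So the step you flag is missing from the paper's proof as well.

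Your progress sketch does not fill that step. Part~1 is not a valid potential argument. When a \texttt{guard}'s \textsc{Extended\_Check} meets a \texttt{settler} ($enctype\in\{1,2\}$), the \texttt{guard} only resets $time_p$ and the same probe repeats. Termination therefore depends on that \texttt{settler} dissolving, which needs a spare \texttt{helper} to reach it. Part~3 is where the argument breaks: the slack count shows that some spare cop survives, not that it can act where progress is needed.

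Here is a concrete configuration. Take $n=20$, $\rho=5$, $k=19$, so both bounds of the theorem hold. Put $3$ cops at $b_1$, $2$ at $b_2$, and one cop at each interior safe-zone vertex $s_1,\dots,s_{14}$, with port~$0$ leading into the unsafe zone at both $b_1$ and $b_2$.
\begin{itemize}
\item The robber kills the port-$0$ \texttt{checkers} of $b_1$ and $b_2$.
\item The other \texttt{checker} of $b_1$ meets $s_1$ and returns, so that edge is marked safe at both ends.
\item Now $b_1$ is a \texttt{guard} with one spare cop, $b_2$ is a lone \texttt{guard}, and every $s_i$ is a lone \texttt{settler} that can never probe.
\item The spare cop's \textsc{Extended\_Check} hits $s_1$ with $enctype=2$. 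The rule that sends ``the minimum-ID \texttt{helper}'' to assist $s_1$ has nobody to send, because the only spare cop is the \texttt{checker}, which just re-probes.
\end{itemize}
So Step~1 never terminates. Read literally, the pseudo-code does even worse. A \texttt{helper} at a \texttt{guard} becomes a \texttt{checker} only if $safe_p=\perp$, so the \texttt{guard} times out, sets $safe_p\leftarrow0$, and $b_1$'s two cops enter Step~2 with $N=2$.

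Hence sufficiency cannot be established by a progress argument for \texttt{Capture-LR} as specified. You would have to modify the algorithm (e.g.\ let a returning \texttt{checker} itself go to a \texttt{settler} that has never had a \texttt{checker}) or strengthen the hypothesis, and then actually prove termination.
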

\begin{proof}
Let $k$ denote the number of cops initially available. By Lemmas~\ref{lem:ns_condition} and~\ref{lem:L_condition_step1}, Step~1 of \texttt{Capture-LR} executes successfully whenever $k\geq n-\rho+4$. Further, by Lemma~\ref{lem:checker-availability}, Step~2 of \texttt{Capture-LR} executes successfully and the robber is surrounded whenever at least $\rho+\lfloor\log\rho\rfloor+2$ cops initiate Step~2. Since, by Lemma~\ref{lem:elim}, at most two cops are eliminated during Step~1, at least $k-2$ cops remain alive at the beginning of Step~2. Hence, Step~2 executes successfully and the robber is surrounded whenever
$k-2\geq\rho+\lfloor\log\rho\rfloor+2$,
that is, whenever
$k\geq\rho+\lfloor\log\rho\rfloor+4$.
Therefore, \texttt{Capture-LR} successfully surrounds the robber whenever
$k\geq\max\{\rho+\lfloor\log\rho\rfloor+4,\;n-\rho+4\}$.
Finally, by Lemma~\ref{lem:complexity},
\texttt{Capture-LR} requires $O(\log n)$ bits of memory per cop, and the cops surround the robber within $O(n^3)$ rounds.
\end{proof}

\begin{remark}
The rooted initial configuration is the worst-case arbitrary initial configuration for \texttt{Capture-LR}, requiring $n+\lfloor\log n\rfloor+4$ cops to successfully surround the robber.
\end{remark}

\section{Discussion and conclusion}\label{sec:disc}
In the classical Black Hole Search (BHS) problem, the black hole is modeled as a static dangerous vertex: any mobile entity entering that vertex is immediately destroyed. A natural extension of this model is to consider a \emph{lethal mobile entity} that can move through the graph over time. In this setting, if a movable resource and the lethal entity attempt to move to the same vertex during a round, then the resource is destroyed. Hence, compared to the classical black hole, the mobile lethal entity possesses additional power due to its mobility. If such an entity is allowed to move freely forever, then mobile entities may continue to be destroyed indefinitely. Therefore, rather than only detecting the dangerous entity, a more meaningful objective is to \emph{capture} or \emph{block} its movement. Observe that the lethal mobile entity can be interpreted as a robber with destructive capability. Suppose we assign all the powers of the lethal entity to the robber in the $\mathcal{CLR}$ model. In our setting, each round consists of two phases: first, the robber moves, and then the cops move. If these two phases are viewed together as a single synchronized round, then the robber behaves exactly as a mobile lethal entity. Indeed, whenever one or more cops attempt to move to the same vertex as the robber during that round, the cops are eliminated.
Consequently, our $\mathcal{CLR}$ algorithm can also be interpreted as an algorithm for capturing a lethal mobile entity. The correctness of the strategy follows from the fact that the cops progressively restrict the movement of the robber until capture becomes inevitable. Hence, our algorithm provides a method for safely capturing a movable black-hole-like entity using mobile resources. 

The immediate open directions include trying for an algorithm with less number of cops and/or providing a better impossibility result. 


\bibliographystyle{plainurl}
\bibliography{Updated_References}
\end{document}